\documentclass[11pt]{article}
\usepackage{amsmath,amsthm}
\usepackage{amssymb,latexsym}
\usepackage{epsfig}
\usepackage{hyperref}
\usepackage{float}
\usepackage{mathpazo}
\usepackage{fullpage}


\newtheorem{theorem}{Theorem}[section]

\newtheorem{claim}[theorem]{Claim}
\newtheorem{lemma}[theorem]{Lemma}

\newtheorem{definition}[theorem]{Definition}
\newtheorem{remark}[theorem]{Remark}

\floatstyle{ruled}
\newfloat{program}{!h}{loa}
\floatname{program}{SDP}
\newfloat{problem}{h}{loa}
\floatname{problem}{Problem}
\newfloat{algorithm}{h}{loa} 
\floatname{algorithm}{Algorithm}
\newfloat{measurement}{h}{loa}
\floatname{measurement}{Measurement}

\newcommand{\qedsymb}{\hfill{\rule{2mm}{2mm}}}
\renewenvironment{proof}[1][]{\begin{trivlist} 
\item[\hspace{\labelsep}{\bf\noindent Proof#1:\/}] }{\qedsymb\end{trivlist}}


\def\Z{{\mathbb{Z}}}

\def\omegans{{\omega^{ns}}}
\DeclareMathOperator{\Exp}{{\rm Exp}}

\newcommand\ip[1]{{\left\langle {#1} \right\rangle}}




\newcommand{\ignore}[1]{}

\newcommand{\eps}{\varepsilon}
\renewcommand{\epsilon}{\varepsilon}

\begin{document}

\title{\bf No Strong Parallel Repetition with Entangled and Non-signaling Provers}

\author{
Julia Kempe\footnote{Blavatnik School of Computer Science, Tel Aviv University, Tel Aviv 69978, Israel.
Supported by the European Commission under the Integrated Project Qubit Applications (QAP) funded
by the IST directorate as Contract Number 015848, by an Alon Fellowship of the Israeli Higher
Council of Academic Research, by an Individual Research Grant of the Israeli Science Foundation, by
a European Research Council (ERC) Starting Grant and by a Raymond and Beverly Sackler Career
Development Chair.}
 \and
 Oded Regev\footnote{Blavatnik School of Computer Science, Tel Aviv University, Tel Aviv 69978, Israel. Supported
   by the Binational Science Foundation, by the Israel Science Foundation,
   by the European Commission under the Integrated Project QAP funded by the IST directorate as Contract Number 015848, and by a European Research Council (ERC) Starting Grant.}
}

\date{}

\maketitle

\begin{abstract}
We consider one-round games between a classical verifier and two provers. One of the main questions in this area is the
\emph{parallel repetition question}: If the game is played $\ell$ times in parallel, does the maximum winning probability
decay exponentially in $\ell$? In the classical setting, this question was answered in the affirmative by Raz.
More recently the question arose whether the decay is of the form $(1-\Theta(\eps))^\ell$
where $1-\eps$ is the value of the game and $\ell$ is the number of repetitions.
This question is known as the \emph{strong parallel repetition question} and was motivated
by its connections to the unique games conjecture.
It was resolved by Raz who showed that strong parallel repetition does \emph{not}
hold, even in the very special case of games known as XOR games.

This opens the question whether strong parallel repetition holds in the case when the provers share entanglement.
Evidence for this is provided by the behavior of XOR games, which have strong (in fact \emph{perfect}) parallel repetition,
and by the recently proved strong parallel repetition of linear unique games. A similar question was open for
games with so-called non-signaling provers. Here the best known
parallel repetition theorem is due to Holenstein, and is of the form $(1-\Theta(\eps^2))^\ell$.

We show that strong parallel repetition holds neither with entangled provers nor with non-signaling provers.
In particular we obtain that Holenstein's bound is tight. Along the way we also provide a tight characterization
of the asymptotic behavior of the entangled value under parallel repetition of unique games
in terms of a semidefinite program.

\end{abstract}

\newpage

\section{Introduction}

\paragraph{Games:}
Two-prover games play a major role both in theoretical computer science, where they led to
many breakthroughs such as the discovery of tight inapproximability results \cite{Hastad01},
and in quantum physics, where already for more than half a century they are
used as a way to understand and experimentally verify quantum mechanics.
In such games, a verifier (or referee) chooses two questions, and sends
one question to each of two non-communicating and computationally unbounded provers (or players) who then respond with answers
taken from $\{1,\ldots ,k\}$ for some $k \geq 1$. The verifier decides
whether to accept (or in other words, whether the players win the game). The question we ask
is: given the verifier's behavior as specified by the game, what is the maximum winning probability
of the provers?

It turns out that the answer to this question depends on the exact power we give to the provers.
In the model most commonly used in theoretical computer science (the classical model),
the provers are simply deterministic functions of their inputs. We call the
maximum winning probability in this case the \emph{(classical) value} of the game
and denote it by $\omega$. We could also allow the provers to share randomness,
but it is easy to see that this cannot increase their winning probability.
Consider, for instance, the CHSH game~\cite{Clauser:69a}: Here the verifier chooses two random bits $x$ and $y$, and sends one to each
prover; he then receives as an answer one bit from each prover (so $k=2$ here), call them $a$ and $b$.
The verifier accepts iff $a \oplus b=x \wedge y$. It is not hard to see that the value
of this game is $\omega(CHSH) = 3/4$.

The second model we consider is that of \emph{entangled provers} in which the two provers,
who still cannot communicate, are allowed to use shared entanglement.
These games, which are sometimes called {\em nonlocal games} in the physics literature, have their origins in the seminal
papers by Einstein, Podolsky, and Rosen~\cite{EinsteinPR35} and Bell~\cite{Bell:64a}.
We define the {\em entangled value} of a game
as the maximum success probability achievable by provers that share entanglement, and denote it by $\omega^*$.
Notice that by definition, for any game we have $\omega^* \ge \omega$.
One of the most astonishing features of quantum mechanics is that sharing entanglement gives the
provers the remarkable ability to create correlations that are impossible to obtain
classically, and hence increase their winning probability.
For instance, it can be shown that $\omega^*(CHSH)=(2+\sqrt{2})/4\approx 0.85$.
This gap between the classical value and the entangled value has fascinated physicists for
decades, and is used as an experimental way to validate quantum mechanics.
Another example of such a gap appears in the so-called \emph{odd-cycle game},
in which, roughly speaking, the provers are asked to color the vertices
of a cycle of length $n$ for some odd $n \ge 1$ with two colors in
such a way that the two colors adjacent to each edge are different.
The value of this game is $1-1/2n$, whereas its entangled value is $1-\Theta(1/n^2)$.

The third model we consider in this paper is that of \emph{non-signaling provers}. This model is of interest mainly as
a theoretical tool to understand the other two models (see, e.g., \cite{ItoKM09,Toner:LP}), as well as two-prover games
in general physical theories~\cite{Linden:Nonlocal} (see also \cite{Brassard:Nonlocal}).
Here, the provers can choose for any question pair an arbitrary distribution on the answers, with the only constraint
being the \emph{non-signaling constraint} --- namely, that the marginal distribution of each prover's answer must only
depend on the question to that prover (and not on the other prover's question). This constraint captures the physical
requirement that the provers are unable to communicate, and leads to the definition of the {\em non-signaling value} of
a game, which we denote by $\omegans$. Notice that for any game, $\omegans \ge \omega^* \ge \omega$. For instance, it is not
hard to see that $\omegans(CHSH)=1$, since we can arrange the distributions on the answers in such a way that the
marginal distributions are always uniform, and at the same time only winning answers are returned.

An important special case of two-prover games is that of {\em unique games}.
Here, the verifier's decision is restricted to be of the form $b=\sigma(a)$
for some permutation $\sigma$ on $[k]$. If, moreover, $k=2$, then the game is
called an \emph{XOR game}. An example of such a game is the CHSH game.
It is very common for the answer set $[k]$ in a unique game to be identified
with some group structure (e.g., $\Z_k$) and for the verifier to check
whether the difference of the two answers $a-b$ is equal to some value.
If this is the case, then we refer to the game as a \emph{linear game}.
In recent years, unique games became one of the most heavily studied topics in theoretical computer
science due to Khot's unique games conjecture~\cite{Khot02} and its strong implications
for hardness of approximation (see, e.g.,~\cite{KhotKMO07}).

\paragraph{Parallel repetition:}
One of the main questions in the area of two-prover games is the \emph{parallel repetition question}.
Here we consider the game $G^\ell$ obtained by playing the game $G$ in parallel $\ell$ times.
More precisely, in $G^\ell$ the verifier sends $\ell$ independently chosen
question pairs to the provers, and expects as answers elements of $[k]^\ell$. He accepts iff all $\ell$ answers
are accepted in the original game. It is easy to see that $\omega(G^\ell) \ge \omega(G)^\ell$
since the provers can play their optimal strategy for $G$ on each of the $\ell$ question
pairs. Similarly, $\omega^*(G^\ell) \ge \omega^*(G)^\ell$ and $\omegans(G^\ell) \ge \omegans(G)^\ell$.
Although at first it might seem that equality should hold here, the surprising fact
is that in most cases the inequality is strict. Even for a simple game like
CHSH we have that $\omega(CHSH^2)=5/8$ (which is bigger than the $9/16$ one
might expect).

The parallel repetition question asks for upper bounds on the value of repeated games.
This fundamental question has many important implications, most notably to
tight hardness of approximability results (e.g.,~\cite{Hastad01}).
The first dramatic progress in this area was made by Raz~\cite{Raz98},
with more recent work by Holenstein~\cite{Holenstein07} and Rao~\cite{Rao:parallel}.
The following theorem summarizes the state of the art in this area.
\begin{theorem}\label{th:parallel}
Let $G$ be a two-prover game with answer size $k$ and value $\omega(G)=1-\eps$. Then for all $\ell \ge 1$,
\begin{enumerate}
\item \cite{Holenstein07} $\omega(G^\ell) \leq (1-\eps^3)^{\Omega{(\ell/\log k)}}$;
 \item \cite{Rao:parallel} If $G$ is a projection game (which is a more general class than unique games)
 then $\omega(G^\ell) \leq (1-\eps^2)^{\Omega{(\ell)}}$.
\end{enumerate}
\end{theorem}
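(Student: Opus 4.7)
The plan is to prove both parts via the information-theoretic paradigm pioneered by Raz, following the streamlined treatment of Holenstein (for part 1) and its projection-game refinement by Rao (for part 2). The overall strategy is by contradiction: starting from a strategy $(\calA,\calB)$ that wins $G^\ell$ with probability $w$, we extract from it a single-game strategy whose winning probability exceeds $\omega(G)=1-\eps$, provided $w$ is larger than the claimed bound.

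The core construction is as follows. Let $S \subseteq [\ell]$ be a subset of coordinates and let $\mathsf{Win}_S$ denote the event that $(\calA,\calB)$ wins on all coordinates in $S$. For each $i \notin S$, I would use $(\calA,\calB)$ conditioned on $\mathsf{Win}_S$ to build a single-game strategy for coordinate $i$: given a fresh question pair $(x,y)$ for $G$, the two provers plant $(x,y)$ in coordinate $i$ and jointly sample the remaining $\ell-1$ question pairs from the distribution $\mu_i$ induced by $\mathsf{Win}_S$ and $X_i=x,\,Y_i=y$; they then run $(\calA,\calB)$ on the full tuple and output their answers in coordinate $i$. Because the provers cannot communicate, this sampling must be performed non-interactively using shared randomness; a correlated-sampling lemma shows that if the two sides' views of $\mu_i$ are statistically close, then they can produce a joint sample whose distribution is close to $\mu_i$, so the resulting single-game strategy wins with probability close to $\Pr[\mathsf{Win}_{S\cup\{i\}} \mid \mathsf{Win}_S]$.

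The information-theoretic heart of the argument is to find a coordinate $i \notin S$ for which the two provers' conditional views of the remaining questions are close enough for correlated sampling to succeed while the conditional winning probability on coordinate $i$ still exceeds $1-\eps$ by a margin $\delta$. A chain-rule and Pinsker computation shows that the average over $i$ of the relevant statistical distance is at most $O(\sqrt{\log(1/\Pr[\mathsf{Win}_S])/\ell})$. Growing $S$ one coordinate at a time, as long as the conditional winning probability stays above the threshold, one obtains after $\Omega(\ell/\log k)$ steps a contradiction unless $w \le (1-\eps^3)^{\Omega(\ell/\log k)}$. The $\log k$ factor and the cubic dependence on $\eps$ both enter through the correlated-sampling step, where the loss in winning probability is bounded by a quantity proportional to $\eta\log k$, with $\eta$ the statistical distance between the two provers' views.

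For part 2, the key structural feature exploited by Rao is that in a projection game, once one prover's answer is fixed there is at most one accepted answer of the other prover. This one-sidedness makes correlated sampling essentially trivial on one side and removes the $\log k$ overhead; combined with a sharper analysis of the conditional answer distributions, it also improves the exponent from $\eps^3$ to $\eps^2$. I expect the main obstacle in both parts to be precisely the correlated-sampling step: one must argue that, with high probability over the conditioning event, the two provers' marginal views of the auxiliary questions are close in statistical distance even though conditioning on $\mathsf{Win}_S$ correlates them non-trivially, and then convert that small statistical distance into an effective joint sample with only an additive loss in the winning probability.
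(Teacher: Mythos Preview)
The paper does not prove this theorem at all: it is stated purely as background, with each item attributed to the cited reference (\cite{Holenstein07} for part~1 and \cite{Rao:parallel} for part~2), and no argument is given in the text. So there is no ``paper's own proof'' to compare against; the paper's treatment is simply to cite the results.

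Your sketch is a reasonable high-level outline of the proofs in the cited references, and the overall architecture (condition on winning a growing set $S$ of coordinates, use a chain-rule/Pinsker bound to find a coordinate where the two provers' conditional views are close, then correlated-sample to embed a fresh game) is indeed the Raz--Holenstein paradigm. Two minor caveats on accuracy: first, the $\log k$ in Holenstein's bound enters because one must also condition on the \emph{answers} in the coordinates of $S$ (each contributing $O(\log k)$ bits), not merely through the correlated-sampling loss as you suggest. Second, your account of Rao's improvement is somewhat off: the removal of the $\log k$ and the sharpening to $\eps^2$ do not come from correlated sampling being ``trivial on one side,'' but rather from a different conditioning (on the event that a particular answer is nearly optimal for the projecting side) that exploits the projection structure to avoid conditioning on full answer tuples. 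These are refinements rather than gaps; as a proof plan your proposal is essentially what the cited papers do, which is all the present paper relies on.
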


In an attempt to better understand the unique games conjecture, Feige, Kindler, and O'Donnell \cite{FKR:foams} asked
whether the bound on $\omega(G^\ell)$ above can be improved to $(1-\eps)^{\Omega{(\ell)}}$, a result called
\emph{strong parallel repetition}. Given the improved bound by Rao, it is only natural to hope that the exponent could
be lowered all the way down to $1$. They observed that if such a result holds, even just for unique games, then we would
get an equivalence of the unique games conjecture to other better studied problems like MAX-CUT.

Somewhat surprisingly, Raz~\cite{Raz08} showed that strong parallel repetition does \emph{not} hold in general. He
showed an example of an XOR game (which is no other than the odd-cycle game mentioned above) whose value is $1-1/2n$
yet even after $n^2$ repetitions, its value is still at least some positive constant. Raz's example was further
clarified and generalized in \cite{BarakHHRRS08} by showing a connection between $\omega(G^\ell)$ and the value of a
certain SDP relaxation of the game. We mention that strong parallel repetition is known to hold in the case of
projection games that are \emph{free}, i.e., the distribution on the questions to the provers is a product
distribution~\cite{BarakRRRS09}. See also~\cite{AroraKKSTV08} for an ``almost strong" parallel repetition statement for
unique games played on expander graphs.


Parallel repetition is much less well understood in the case of entangled provers. In fact, no parallel repetition result
is known for the entangled value of general games, and this is currently one of the main open questions in the area.
However, parallel repetition results \emph{are} known for several classes of games with entangled provers, as described
in the following theorem, in which we also mention Holenstein's~\cite{Holenstein07} parallel repetition
result for the non-signaling value.

\begin{theorem}\label{th:parallelforentangled}
Let $G$ be a two-prover game with answer size $k$, entangled value $\omega^*(G)=1-\eps^*$, and non-signaling value
$\omega^{ns}(G)=1-\eps^{ns}$. Then for all $\ell \ge 1$,
\begin{enumerate}
 \item \cite{CleveSUU07} If $G$ is an XOR game, then $\omega^*(G^\ell) =(1-\eps^*)^\ell$;
 \item \cite{KempeRT:unique} If $G$ is a unique game, then $\omega^*(G^\ell) \leq (1-\frac{(\eps^*)^2}{16})^\ell$;
 \item \cite{KempeRT:unique} If $G$ is a linear game, then $\omega^*(G^\ell) \leq (1-\frac{\eps^*}{4})^\ell$;
 \item \cite{Holenstein07}  $\omegans(G^\ell) \leq (1-(\eps^{ns})^2)^{\Omega{(\ell)}}$.
\end{enumerate}
\end{theorem}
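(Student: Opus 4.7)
The plan is to handle the four parts separately, as each corresponds to a distinct technique from a different paper.

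For part (1), I would invoke Tsirelson's semidefinite characterization of XOR games: the bias $\beta^*(G) := 2\omega^*(G)-1$ of an XOR game equals the optimum of an SDP over unit vectors $\{u_x\},\{v_y\}$ maximizing $\sum_{x,y}\pi(x,y)(-1)^{f(x,y)}\langle u_x,v_y\rangle$, and any SDP-feasible point is realized as an actual entangled strategy via Tsirelson's rounding. Tensoring the optimal vectors for $G$ yields a feasible SDP solution for $G^\ell$ of bias $\beta^*(G)^\ell$, while coordinate-wise observables from any entangled strategy for $G^\ell$ factor over the $\ell$ coordinates and give a matching upper bound on the dual. The resulting identity $\beta^*(G^\ell) = \beta^*(G)^\ell$ is exactly the perfect parallel repetition statement in part (1).

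For parts (2) and (3), the approach of \cite{KempeRT:unique} is to construct an SDP relaxation of $\omega^*$ for unique (resp.\ linear) games whose optimum is multiplicative under parallel repetition (again by tensoring solutions). The gap between the SDP and $\omega^*$ is controlled by a rounding step: for unique games the best rounding loses a square root, so an SDP value of $1-\delta$ implies $\omega^* \ge 1 - O(\sqrt{\delta})$; for linear games the group structure affords a tighter rounding losing only a constant factor. Running the contrapositive --- if $\omega^*(G^\ell)$ decayed more slowly than claimed, multiplicativity would force the SDP of $G$ to be too close to $1$, contradicting $\omega^*(G) = 1-\eps^*$ via the rounding --- yields the claimed exponents $(\eps^*)^2/16$ and $\eps^*/4$.

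For part (4), I would follow Holenstein's information-theoretic scheme. Assume a near-optimal $\ell$-fold non-signaling strategy with $\omegans(G^\ell) > (1-(\eps^{ns})^2)^{c\ell}$ and condition on the event that it wins a random subset $T$ of coordinates. The induced single-game strategy on a random further coordinate $i \notin T$ is close, on average over $T$ and $i$, to a product non-signaling strategy; crucially, the non-signaling constraint survives these conditionings because each prover's marginal depends only on its own inputs. A KL-divergence calculation bounds the loss from the conditioning, and Pinsker's inequality converts this to statistical distance, producing a single-game non-signaling strategy winning $G$ with probability exceeding $1-\eps^{ns}$, a contradiction.

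The main obstacle in each case is the reduction that governs the exponent. The square-root rounding for unique-games SDPs produces the $(\eps^*)^2$ in part (2); the tighter rounding for linear games avoids this loss, giving part (3); and Pinsker's inequality is responsible for the $(\eps^{ns})^2$ in Holenstein's proof. Multiplicativity of the underlying SDP or information-theoretic quantity is, by comparison, the easier half in each case.
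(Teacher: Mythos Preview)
The paper does not prove this theorem at all: it is stated as a summary of results from \cite{CleveSUU07}, \cite{KempeRT:unique}, and \cite{Holenstein07}, with only a one-sentence remark afterward that parts (1)--(3) all proceed by showing $\omega^*$ is close (equal, for XOR games) to the relaxation SDP\ref{sdp:1} and that SDP\ref{sdp:1} tensorizes. Your sketch for parts (2), (3), and (4) is accurate and matches both that remark and the cited sources; the contrapositive phrasing for (2)--(3) is slightly roundabout, but the content (rounding gap plus multiplicativity of the SDP) is exactly right.

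There is, however, a genuine slip in your treatment of part (1). You reduce the claim to the bias identity $\beta^*(G^\ell)=\beta^*(G)^\ell$, but this identity pertains to the \emph{XOR product} (take the XOR of the $\ell$ outcomes), not to the direct product $G^\ell$ considered here, in which the provers must win all $\ell$ coordinates. The repeated game $G^\ell$ is not itself an XOR game, so it has no bias in the Tsirelson sense, and $\beta^*(G^\ell)=\beta^*(G)^\ell$ is not even a well-formed statement for it. The actual argument in \cite{CleveSUU07}, and the one the paper alludes to, works with the value SDP (SDP\ref{sdp:1} here): Tsirelson's theorem is recast as the equality $\omega^*(G)=\omega_{\mathrm{SDP1}}(G)$ for XOR games, and then tensorization of SDP\ref{sdp:1} gives $\omega^*(G^\ell)\le \omega_{\mathrm{SDP1}}(G^\ell)=\omega_{\mathrm{SDP1}}(G)^\ell=\omega^*(G)^\ell$, with the reverse inequality trivial. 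So the ingredients you list are right, but they must be assembled through the value SDP rather than the bias.
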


Hence, we see that strong parallel repetition holds for the entangled value of linear games. In fact, in the case of
XOR games, we have \emph{perfect} parallel repetition.

All the above results involving the entangled value are derived by (i)
showing that $\omega^*$ is close (or in fact \emph{equal} in the case of XOR games)
to a certain SDP relaxation (which appears as SDP1 below), and (ii) showing
that this SDP relaxation ``tensorizes", i.e., that the value of the SDP corresponding
to $G^\ell$ is exactly the $\ell$th power of the value of the
SDP corresponding to $G$.

The above naturally raises the question of whether the entangled value obeys \emph{strong} parallel repetition, if not
in the general case, then at least in the case of unique games. The nearly tight characterization of the entangled
value of unique games using semidefinite programs \cite{KempeRT:unique} (see Lemma~\ref{lem:SDP2rounding} below) is one
reason to hope that such a strong parallel repetition would hold. Raz's counterexample does not provide a negative
answer to this question, since it is an XOR game, for which perfect parallel repetition holds in the entangled case.
Similarly, in the case of non-signaling provers there has been no evidence that strong parallel repetition does not
hold. In fact, because the non-signaling value is {\em exactly} given by a linear program (LP) (see, e.g.,
\cite{Toner:LP}), one might conjecture that strong parallel repetition should hold since ``all" one has to do is
understand the tensorization properties of the corresponding LP.

\paragraph{Our results:}
We answer the above question in the negative, by giving a counterexample to strong parallel repetition for games
with entangled provers. More precisely, we give a game with entangled value $1-\Omega(1/n)$ such that after $n^2$
repetitions the entangled value of the repeated game is still a positive constant.
Our example (after a minor modification) is a {\em unique} game with three possible answers,
the smallest possible alphabet size for such a counterexample, because unique games with two
answers are by definition XOR games for which perfect parallel repetition holds.
Hence we obtain an interesting `phase transition' in the entangled value
of unique games: whereas for alphabet size $2$ we have \emph{perfect} parallel repetition,
already for alphabet size $3$ we do not even have strong parallel repetition.
Our result shows that the upper bound for unique games in Theorem~\ref{th:parallelforentangled}.2
is essentially tight.

We also show that our game has a non-signaling value of $1-\Omega(1/n)$.
This implies that strong parallel repetition fails also for the non-signaling value and that
Holenstein's result (Theorem~\ref{th:parallelforentangled}.4) is in fact tight.

As part of the proof we observe (see Theorem~\ref{thm:sdp1amortized}) using results from~\cite{KempeRT:unique}
that the asymptotic behavior of the entangled value
of repeated unique games is almost precisely captured by a certain SDP
(SDP1 in Sec.~\ref{sec:prelim}). This is a pleasing
state of affairs, since we now have a nearly tight SDP characterization
both of the value of a unique game (SDP2 in Sec.~\ref{sec:prelim}) and of its
asymptotic value (SDP1). Incidentally, SDP1 was also shown
to characterize the asymptotic behavior of the \emph{classical} value
of repeated unique games, although the bounds there
were considerably less tight, as they include some logarithmic factors (which are
conjectured to be unnecessary) and also depend on the
alphabet size (see Lemma~\ref{lem:crounding}).

Combining the above observation with our counterexample, we obtain a separation between SDP1 and SDP2. Namely, for the
game described in our counterexample, SDP2 is $1-\Theta(1/n)$ (since it is very close to the value of the game) whereas
SDP1 is $1-\Theta(1/n^2)$ (since it describes the asymptotic behavior). Both SDPs have been used before in the
literature (e.g., \cite{KhotV05,AroraKKSTV08,KempeRT:unique,BarakHHRRS08}) and to the best of our knowledge no gap
between them was known before. Perhaps more interestingly, our example also implies that SDP2 does \emph{not}
tensorize, since for the basic game SDP2 is $1-\Theta(1/n)$ yet after $n^2$ repetition its value is still some positive
constant (since it is a relaxation of the entangled value).

\paragraph{Our construction:}
Our counterexample is inspired by the odd-cycle game (yet it is neither a cycle nor is it odd). We call it the
\emph{line game}. Recall that the odd-cycle game was used by Raz~\cite{Raz08} as a counterexample to strong parallel repetition in the
classical case. However, since it is an XOR game, it obeys perfect parallel repetition in the entangled case,
and moreover, its non-signaling value is $1$, so it cannot provide a counterexample in our setting.

Roughly speaking, in the line game the players are asked to color a path of length $n$ with two colors in such a way that
any two adjacent vertices have the same color, yet the leftmost vertex must be colored in color 1 and the rightmost
vertex must be colored with color 2 (see Fig.~\ref{fig:linegame}a). More precisely, the verifier randomly chooses to send to the provers either two adjacent vertices
or the same two vertices. He expects the two answers to be the same, unless both vertices are the leftmost vertex,
in which case both answers must be $1$, or both vertices are the rightmost vertex, in which case both answers
must be $2$.

It is not hard to see that the classical value of this game is $1-\Theta(\frac{1}{n})$, as is the
case for the odd-cycle game. However, unlike the odd-cycle game, it turns out that the entangled value
and even the non-signaling value of this game are also $1-\Theta(\frac{1}{n})$. An intuitive way to see this is to argue about the marginals on Alice's and
Bob's answer to each question. Forcing the ends of the line into a fixed answer forces the corresponding marginals
to be close to distributions that always output $1$ on the left and $2$ on the right. The marginals for questions in
between the ends must therefore move from the all-$1$ to the all-$2$ distribution, which can only be done at the
expense of losing with probability $\Omega(1/n)$. For comparison, in the odd-cycle game we can manage with a strategy
where all marginals are uniform, and hence its non-signaling value is actually exactly $1$!

As we will show in Section~\ref{sec:linerepeat}, after repeating the line game $n^2$ times, its
entangled value (and even classical value) are still bounded from below by some positive constant.
In particular, this implies that strong parallel repetition does not hold for the entangled value
nor for the non-signaling value.
This lower bound can be shown directly by explicitly demonstrating the provers' strategy.
Instead, we will follow a slightly indirect route, using SDP\ref{sdp:1} to argue about the
behavior of the game (or in fact of its unique game variant described below) under parallel
repetition, as we feel this gives more insight into the behavior of parallel repetition of unique games.

As described above, the line game is not a unique game, due to the non-permutation constraints on both ends.
In order to provide a counterexample for strong parallel repetition even for
unique games, we present a simple modification of the game that leads to what we
call the \emph{unique line game}. Roughly speaking, this is done by increasing the answer size to $3$,
replacing the constraint on the leftmost vertex with a permutation that switches 2 and 3,
and similarly replacing the constraint on the rightmost vertex with a permutation that switches 1 and 3.
This has a similar effect to the non-permutation constraints in the original line game,
and as a result, the classical, entangled, and non-signaling values of this game are
more or less the same as in the line game, both for the basic game and its repetition.

\section{Preliminaries}\label{sec:prelim}


\paragraph{Games:}
We study \textit{one-round two-prover cooperative games of incomplete information}, also known in the quantum
information literature as \textit{nonlocal games}. In such a game, a referee (also called the verifier) interacts with
two provers, Alice and Bob, whose joint goal is to maximize the probability that the verifier outputs {\em ACCEPT}.
In more detail, we represent a game $G$ as a distribution over triples $(s,t,\pi)$ where $s$ and $t$ are elements
of some question set $Q$, and $\pi:[k]\times [k] \to \{0,1\}$ is a predicate over pairs of answers taken
from some alphabet $[k]$. The game described by such a $G$ is as follows.

\begin{itemize}
 \item The verifier samples $(s, t, \pi)$ according to $G$.
 \item He sends $s$ to Alice and receives an answer $a \in [k]$.
 \item He sends $t$ to Bob and receives an answer $b \in [k]$.
 \item He then accepts iff $\pi(a,b)=1$.
\end{itemize}

This definition of games is the one used by~\cite{BarakHHRRS08} and is slightly more
general than the one commonly used in the literature, which requires
that each pair $(s,t)$ is associated with exactly one predicate $\pi$.
Our definition allows the verifier to associate more than one predicate $\pi$ (in fact, a distribution over predicates)
to each question pair $(s,t)$. Such games are sometimes known as games with
probabilistic predicates. We use this definition mostly for convenience,
since as we shall see later, our counterexamples either do not use probabilistic predicates,
or can be modified to avoid them (but see Remark~\ref{rem:paralleledges}
for one instance in which probabilistic predicates are provably necessary).
Moreover, the results in~\cite{CleveSUU07,KempeRT:unique,BarakHHRRS08} hold
for games with probabilistic predicates, and this is in particular
true for Lemmas~\ref{lem:tensor}, \ref{lem:rounding}, and \ref{lem:SDP2rounding},
which we need for our construction. Finally, Raz~\cite{Raz98} briefly discusses
how to extend his parallel repetition theorem to games with probabilistic predicates,
whereas the results in~\cite{Holenstein07,Rao:parallel} most likely also extend to this case, although
this remains to be verified; in any case, these results are not needed for our construction.

We define the \emph{(classical) value} of a game, denoted by $\omega(G)$, to be the maximum probability
with which the provers can win the game, assuming they behave classically, namely,
they are simply functions from $Q$ to $[k]$. We can also allow the provers to share
randomness, but it is easy to see that this does not increase their winning probability.
We define the \emph{entangled value} of a game, $\omega^*(G)$, to be the maximum winning
probability assuming the provers are allowed to share entanglement. The precise definition of entangled strategies can be found in, e.g.,
\cite{KempeRT:unique}, but will not be needed in this paper.
We essentially just have to know that the entangled value is bounded from above by the \emph{non-signaling value}, which is defined as follows.

\begin{definition}\label{def:nonsignaling}
  A non-signaling strategy for a game $G$ is a set of probability distributions $\{p_{s,t}\}$ over
  $[k]\times [k]$ for all $s,t \in Q$ such that
  $$ \forall s,s',t,t'\in Q \quad A_{s,t}=A_{s,t'}=:A_s \quad \textrm{and}
  \quad B_{s,t}=B_{s',t}=:B_t,$$
  where $A_{s,t}(a)$, $B_{s,t}(b)$ are the marginals of $p_{s,t}$ on the first and second
  answer respectively. The non-signaling value of the game is
    $$\omegans(G)=\max  \displaystyle \mathop{ \Exp}_{(s,t,\pi)\sim G} \Big[ \sum_{a,b=1}^k p_{s,t}(a,b) \pi(a,b) \Big] $$
  where the maximum is taken over all non-signaling strategies $\{p_{s,t}\}$.
\end{definition}

\begin{definition}
A game is called {\em unique} if the third component of the triples $(s,t,\pi)$ is always
a permutation constraint, namely, it is $1$ iff $\sigma(a)=b$ for some permutation $\sigma$.
We will sometimes think of such games as distributions over triples $(s,t,\sigma)$.
Furthermore, a unique game is called \emph{linear} if we can identify $[k]$ with some
Abelian group of size $k$ and the third component of $(s,t,\sigma)$
is always of the form $\sigma(a)=a+r$ for some element $r$ of the group.
\end{definition}

\paragraph{Parallel Repetition:} Given a game $G_1$ with questions $Q_1$ and answers in $[k_1]$ and the game $G_2$ with questions $Q_2$ and answers in $[k_2]$,
we define the {\em product} $G_1 \times G_2$ to be a game with questions $Q_1 \times Q_2$ and answers in $[k_1] \times
[k_2]$ defined by the distribution obtained by sampling $(s_1,t_1,\pi_1)$ from $G_1$
and $(s_2,t_2,\pi_2)$ from $G_2$ and outputting $((s_1,s_2),(t_1,t_2),\pi_1\times \pi_2)$ where
$\pi_1 \times \pi_2: [k]^2 \times [k]^2 \to \{0,1\}$ is given by $(\pi_1 \times \pi_2)((a_1,a_2),(b_1,b_2)) =
\pi_1(a_1,b_1) \pi_2(a_2,b_2)$.

We denote the $\ell$-fold product of $G$ with itself by $G^{
\ell}$. Clearly, $\omega(G^\ell) \ge \omega(G)^\ell$ and similarly for $\omega^*$ and $\omegans$, since
the provers can play each instance of the game independently, using an optimal strategy. Parallel repetition theorems attempt to provide upper bounds on
the value of repeated games. It is often convenient to speak about the {\em amortized value} of a game,
defined as $\overline{\omega}(G)=\lim_{\ell \rightarrow \infty} \omega(G^\ell)^{\frac{1}{\ell}} \ge \omega(G)$, and similarly for $\overline
\omega^*(G)$ and $\overline \omegans(G)$.

\paragraph{SDP Relaxations:}
The main SDP relaxation we consider in this paper is SDP\ref{sdp:1}, which is defined for any game $G$.
The maximization is over the real vectors $\{u_a^s\}$, $\{v_b^t\}$.

\begin{program}
\caption{}\label{sdp:1}
\begin{tabular}{p{0.24 \textwidth}p{0.7 \textwidth}}
\textbf{Maximize:} & $\Exp_{(s,t,\pi)\sim G} \sum_{ab} \pi(a,b) \ip{u_a^s, v_b^t}$\\
\textbf{Subject to:} & $\forall s,~\forall a \neq b,~ \ip{u_a^s,u_b^s}=0$ and $\forall t,~\forall a \neq b,~ \ip{v_a^t,v_b^t}=0$\\
& $\forall s,~\sum_{a}\ip{u_a^s,u_a^s} =1$ and $\forall t,~\sum_{b}\ip{v_b^t,v_b^t} =1$\\
\end{tabular}
\end{program}

It follows from Theorem 5.5 and Remark 5.8 of~\cite{KempeRT:unique}
that SDP\ref{sdp:1} has the tensorization property.
 \begin{lemma}\label{lem:tensor}
For any game $G$ and any $\ell \ge 1$, $\omega_{SDP1}(G^\ell)=(\omega_{SDP1}(G))^\ell$,
where $\omega_{SDP1}$ denotes the optimum value of $SDP1$ for a particular game.
 \end{lemma}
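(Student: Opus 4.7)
The plan is to prove both inequalities $\omega_{SDP1}(G^\ell) \geq \omega_{SDP1}(G)^\ell$ and $\omega_{SDP1}(G^\ell) \leq \omega_{SDP1}(G)^\ell$ separately; the second direction naturally reduces by induction on $\ell$ to the case $\ell = 2$.

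For the lower bound, I would take an optimal SDP1 solution $\{u_a^s, v_b^t\}$ for $G$ and form the tensor products $U_A^S = u_{a_1}^{s_1} \otimes \cdots \otimes u_{a_\ell}^{s_\ell}$ and $V_B^T = v_{b_1}^{t_1} \otimes \cdots \otimes v_{b_\ell}^{t_\ell}$, indexed by $S = (s_1, \ldots, s_\ell)$ and $A = (a_1, \ldots, a_\ell)$. Orthogonality of $\{U_A^S\}_A$ follows because distinct $A, A'$ differ in some coordinate $i$, where the single-coordinate orthogonality propagates through the tensor; normalization factorizes as $\sum_A \|U_A^S\|^2 = \prod_i \sum_{a_i} \|u_{a_i}^{s_i}\|^2 = 1$. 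Since the question distribution of $G^\ell$ is a product and the predicate factorizes as $\Pi(A,B) = \prod_i \pi_i(a_i, b_i)$, the SDP1 objective evaluated on these tensor vectors equals exactly $\omega_{SDP1}(G)^\ell$.

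For the upper bound, after reducing to $\ell=2$, I aim to show $\omega_{SDP1}(G_1 \times G_2) \leq \omega_{SDP1}(G_1) \cdot \omega_{SDP1}(G_2)$ given an optimal product solution $\{U_{(a_1,a_2)}^{(s_1,s_2)}, V_{(b_1,b_2)}^{(t_1,t_2)}\}$. The strategy is an explicit direct-sum construction: define vectors $\tilde u_{a_1}^{s_1}$ and $\tilde v_{b_1}^{t_1}$ for $G_1$ by direct-summing over the $G_2$-coordinates $(s_2, t_2)$ weighted by $\sqrt{p(s_2, t_2)}$, tagged with indices for $(a_2, b_2)$ so that both families live in a common Hilbert space. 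The SDP1 orthogonality and normalization constraints for $G_1$ then transfer directly from the constraints of the product SDP, while the inner products $\ip{\tilde u_{a_1}^{s_1}, \tilde v_{b_1}^{t_1}}$ contain, for each fixed first-coordinate data, an inner sum that itself has the form of an SDP1 objective for $G_2$ and can therefore be bounded by $\omega_{SDP1}(G_2)$.

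The main obstacle is making this inner bound rigorous: one must either recognize the inner sum as a feasible SDP1 instance for $G_2$ (up to normalization) and invoke its optimum, or carry a Cauchy--Schwarz-type argument across the direct sum to decouple the two coordinates multiplicatively. Either route is somewhat delicate because the natural direct-sum inner product does not immediately factor the product predicate $\pi_1 \times \pi_2$ into separable pieces. The cleanest alternative, and the one the lemma invokes explicitly, is to use the operational characterization underlying Theorem~5.5 and Remark~5.8 of \cite{KempeRT:unique}, where the SDP1 constraints translate into a state-and-measurement structure for which tensorization under products of games follows automatically; the claim $\omega_{SDP1}(G^\ell) = \omega_{SDP1}(G)^\ell$ is then immediate.
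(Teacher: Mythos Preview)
The paper does not actually prove this lemma: it simply cites Theorem~5.5 and Remark~5.8 of~\cite{KempeRT:unique} and adds the intuitive remark that tensorization holds because SDP\ref{sdp:1} is \emph{bipartite} (the objective involves only $u$--$v$ inner products, and all constraints are equalities involving only $u$'s or only $v$'s). Your proposal ultimately lands in the same place, invoking the same citation, so in that sense it is aligned with the paper.

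Your lower bound via tensoring an optimal solution is correct and standard. Your attempted self-contained argument for the upper bound, however, does not go through as written, and the obstacle is more than a matter of delicacy. In the direct-sum construction you sketch, the restricted family $\{U_{(a_1,a_2)}^{(s_1,s_2)}\}_{a_2}$ for fixed $(s_1,a_1,s_2)$ is orthogonal but not normalized (its total squared norm is $\|\tilde u_{a_1}^{s_1}\|$-dependent, not $1$), so the inner sum is not a feasible SDP\ref{sdp:1} instance for $G_2$. Renormalizing introduces data-dependent scalars that couple the two coordinates and block the multiplicative bound. More fundamentally, the question distribution on $(s_2,t_2)$ need not be a product, so there is no single weight $\sqrt{p(s_2)}$ and $\sqrt{p(t_2)}$ that aligns the $u$- and $v$-direct sums in a common Hilbert space while reproducing the $G_2$ objective. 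The actual argument in~\cite{KempeRT:unique} (following~\cite{FL,MittalSzegedy:SDPtensor}) works at the Gram-matrix level and exploits precisely the bipartite-equality structure the paper highlights; that structural observation is the real content here, and your proposal would be improved by foregrounding it rather than the direct-sum heuristic.
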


The proof of this lemma is based on ideas from \cite{FL,MittalSzegedy:SDPtensor}.
Ignoring some subtle issues,
the essential reason that Lemma~\ref{lem:tensor} holds is because
SDP\ref{sdp:1} is \emph{bipartite}, i.e., its goal function only involves inner products
between $u$ variables and $v$ variables, and its constraints
are all equality constraints and involve either only $u$ variables or only $v$ variables
(see~\cite{KempeRT:unique} for details).

The value of SDP\ref{sdp:1} is an upper bound for the entangled value of the game, and in \cite{KempeRT:unique} it is shown that its value is not too far from
the entangled value of unique games.

\begin{lemma}[\cite{KempeRT:unique}]\label{lem:rounding}
Let $G$ be a unique game with $\omega_{SDP1}(G)=1-\delta$.
Then $1-8 \sqrt{\delta} \le \omega^*(G) \le 1- \delta$.
\end{lemma}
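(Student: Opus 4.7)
The upper bound $\omega^*(G) \le 1-\delta$ is the easy direction: SDP\ref{sdp:1} is a relaxation of $\omega^*$. Given any entangled strategy for $G$, which we may assume (by Naimark dilation, purification, and a standard symmetrization making Alice's and Bob's reduced states coincide) consists of projective measurements $\{M_a^s\}_a$, $\{N_b^t\}_b$ on a shared pure state $\ket{\psi}$, set $u_a^s := (M_a^s \otimes \id)\ket{\psi}$ and $v_b^t := (\id \otimes N_b^t)\ket{\psi}$. Orthogonality of the projectors for fixed $s$ or fixed $t$ yields the first SDP constraint, completeness $\sum_a M_a^s = \id$ yields the second, and $\ip{u_a^s,v_b^t} = \bra{\psi}M_a^s \otimes N_b^t\ket{\psi}$ is exactly the joint probability of outcomes $(a,b)$, so the SDP value matches the winning probability and $\omega_{SDP1}(G)\ge \omega^*(G)$.

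For the lower bound, I would produce an entangled strategy of value at least $1-8\sqrt\delta$ from an SDP\ref{sdp:1} solution of value $1-\delta$. Using the permutation structure of the unique game the objective rewrites as $\Exp_{(s,t,\sigma)}\sum_a \ip{u_a^s, v_{\sigma(a)}^t} = 1-\delta$, and combining this with $\sum_a \norm{u_a^s}^2 = \sum_b \norm{v_b^t}^2 = 1$ yields the ``vector-close'' bound
\begin{equation*}
  \Exp_{(s,t,\sigma)}\sum_a \norm{u_a^s - v_{\sigma(a)}^t}^2 \;=\; 2\delta.
\end{equation*}
Because for each $s$ the vectors $\{u_a^s\}_a$ are mutually orthogonal with squared norms summing to one, they define a projective measurement on the ambient real Hilbert space $\R^d$ (pad with a dummy outcome if needed), and similarly for Bob with $\{v_b^t\}_b$ after the appropriate complex conjugation. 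The strategy is then: the provers share the $d$-dimensional maximally entangled state, Alice measures according to $\{\hat u_a^s\}_a$, and Bob measures according to the conjugated family $\{\overline{\hat v_b^t}\}_b$, where $\hat u_a^s := u_a^s/\norm{u_a^s}$ and $\hat v_b^t := v_b^t/\norm{v_b^t}$.

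The main obstacle is controlling the loss. The SDP objective is linear in the inner products $\ip{u_a^s,v_{\sigma(a)}^t}$, while the quantum winning probability is, up to weight factors $\norm{u_a^s}^2\norm{v_{\sigma(a)}^t}^2$ coming from the maximally entangled state, a sum of \emph{squared} inner products $|\ip{\hat u_a^s,\hat v_{\sigma(a)}^t}|^2$. Converting closeness in inner product to closeness in squared inner product is what costs a square root, via the elementary bound $1-|\ip{\hat u,\hat v}|^2 \le 2(1-\ip{\hat u,\hat v})$ followed by Cauchy--Schwarz against the $\ell^2$ bound $2\delta$ above. Balancing the weight factors so that Alice's and Bob's marginals are simultaneously induced by the same shared state (which requires one tweak of the basic strategy when $\norm{u_a^s}^2 \ne \norm{v_{\sigma(a)}^t}^2$) and chasing the constants is the main bookkeeping; the conceptual mechanism is the passage from linear to quadratic, and the final constant works out to $8$.
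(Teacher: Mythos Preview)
The paper does not itself prove this lemma --- it is quoted from \cite{KempeRT:unique}. However, the paper uses enough of the structure of that proof (see the discussion following Theorem~\ref{thm:mainunique}) that we know the rounding (i) uses a shared maximally entangled state, (ii) measures it in an orthonormal basis derived from the SDP vectors, and (iii) crucially contains a \emph{correlated sampling} step. Your upper bound is fine, and your high-level picture for the lower bound (SDP vectors $\to$ quantum measurement; $\sqrt\delta$ loss from passing linear inner products to squared ones) is in the right spirit. But the concrete strategy you wrote does not do what you claim: if Alice measures her half of the $d$-dimensional maximally entangled state with the rank-one projectors $|\hat u_a^s\rangle\langle\hat u_a^s|$, her marginal probability of outcome $a$ is exactly $1/d$, not $\|u_a^s\|^2$, and no ``weight factors $\|u_a^s\|^2\|v_{\sigma(a)}^t\|^2$'' appear. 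For a generic SDP solution in $\R^d$ with $d\gg k$, Alice lands on the orthogonal-complement (``dummy'') outcome with probability $1-k/d$, and your sketch says nothing about what to answer then; the unspecified ``one tweak'' does not close this.

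The missing ingredient is exactly the correlated-sampling step the paper flags. Roughly, the measurement on the maximally entangled state hands Alice and Bob a pair of highly correlated random unit vectors, and these are then used as shared randomness to perform correlated sampling of answers from the distributions $a\mapsto\|u_a^s\|^2$ and $b\mapsto\|v_b^t\|^2$ (suitably permuted by $\sigma$). That is what produces the correct marginals and drives the analysis to $\omega^*(G)\ge 1-8\sqrt\delta$; it is the key mechanism your sketch lacks.
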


Moreover, in a recent result by Barak et al.~\cite{BarakHHRRS08} it was shown that SDP\ref{sdp:1} essentially
characterizes the amortized (classical) value of unique games, up to a factor that depends on the alphabet size and logarithmic corrections.
\begin{lemma}[\cite{BarakHHRRS08}]\label{lem:crounding}
For any unique game $G$ with $\omega_{SDP1}(G)=1-\delta$ and $\ell \ge 1$,
$\omega(G^\ell)\geq 1-O(\sqrt{\ell \delta \log(k/\delta)})$, and moreover,
$1-O(\delta \log(k/\delta)) \le \overline{\omega}(G^\ell) \le 1-\delta$.
\end{lemma}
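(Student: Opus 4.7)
The proof splits into the easy tensorization-based upper bound, the finite-$\ell$ rounding lower bound, and the amortized lower bound which drops out of the finite-$\ell$ bound.

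For the upper bound $\overline{\omega}(G) \le 1-\delta$, the plan is to combine the tensorization of SDP\ref{sdp:1} (Lemma~\ref{lem:tensor}) with the fact that SDP\ref{sdp:1} is a relaxation of the classical value (as it already upper bounds the entangled value, and $\omega\le\omega^*$). This yields
\[
\omega(G^\ell) \;\le\; \omega_{SDP1}(G^\ell) \;=\; \omega_{SDP1}(G)^\ell \;=\; (1-\delta)^\ell \quad\text{for every }\ell,
\]
and taking $\ell$-th roots followed by the limit $\ell\to\infty$ gives $\overline{\omega}(G) \le 1-\delta$.

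The bulk of the work is in the finite-$\ell$ lower bound $\omega(G^\ell) \ge 1-O(\sqrt{\ell\delta\log(k/\delta)})$. Fix an optimal SDP\ref{sdp:1} solution $\{u_a^s\},\{v_b^t\}$ for $G$ with value $1-\delta$; by Lemma~\ref{lem:tensor} the tensor-product vectors $\bigotimes_i u_{a_i}^{s_i}$ and $\bigotimes_i v_{b_i}^{t_i}$ are an optimal SDP\ref{sdp:1} solution for $G^\ell$. I would round these to a classical strategy in which Alice and Bob apply, coordinate by coordinate, a hyperplane/Gaussian rounding tailored to unique games, using fresh shared randomness for each of the $\ell$ copies. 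The per-coordinate analysis, in the style of Feige-Lov\'asz, shows that for SDP loss $\delta_i:=1-\sum_a\langle u_a^{s_i},v_{\sigma_i(a)}^{t_i}\rangle$ on the $i$-th coordinate, the failure probability $\Pr[\sigma_i(a_i)\neq b_i]$ is at most $O(\sqrt{\delta_i\log(k/\delta_i)})$; the $\log(k/\delta_i)$ factor, absent from the entangled rounding of Lemma~\ref{lem:rounding}, comes from having to pick out one of $k$ candidate answers consistently on the two sides using only shared randomness. The key point --- and the main technical obstacle --- is to aggregate the per-coordinate analyses correctly: one wants a total failure bound of order $\sqrt{\ell\delta\log(k/\delta)}$, not the $\ell\sqrt{\delta\log(k/\delta)}$ that a naive union bound would give. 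This square-root-in-$\ell$ saving relies on a concentration argument on the random quantities $\{\delta_i\}$ (whose sum has mean $\ell\delta$ with subgaussian fluctuations $O(\sqrt{\ell\delta})$) together with the independence of the shared randomness across coordinates; crucially, performing the rounding coordinate-by-coordinate keeps the logarithmic overhead at $\log(k/\delta)$, whereas applying a generic single-game rounding directly to $G^\ell$ would pay $\log(k^\ell)=\ell\log k$.

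Finally, the amortized lower bound $\overline{\omega}(G)\ge 1-O(\delta\log(k/\delta))$ is an immediate consequence of the finite-$\ell$ bound. Let $C$ be the constant hidden in the $O(\cdot)$ of the finite-$\ell$ bound, and choose $\ell^\star$ of order $1/(4C^2\delta\log(k/\delta))$, so that $C\sqrt{\ell^\star\delta\log(k/\delta)}\le 1/2$ and hence $\omega(G^{\ell^\star})\ge 1/2$. Then
\[
\overline{\omega}(G) \;\ge\; \omega(G^{\ell^\star})^{1/\ell^\star} \;\ge\; 2^{-1/\ell^\star} \;=\; e^{-(\ln 2)/\ell^\star} \;\ge\; 1-\frac{\ln 2}{\ell^\star} \;=\; 1-O(\delta\log(k/\delta)),
\]
as required.
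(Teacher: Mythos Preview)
The paper does not prove this lemma itself; it is quoted from~\cite{BarakHHRRS08}. Your upper bound via tensorization (Lemma~\ref{lem:tensor}) is correct, and your derivation of the amortized lower bound from the finite-$\ell$ bound is also fine. The genuine gap is in your proposed proof of the finite-$\ell$ lower bound $\omega(G^\ell)\ge 1-O(\sqrt{\ell\delta\log(k/\delta)})$.

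You propose to round \emph{coordinate by coordinate, with fresh shared randomness for each of the $\ell$ copies}. But any such strategy is a \emph{product strategy}: Alice's answer on coordinate $i$ depends only on $s_i$ and the $i$-th block of randomness, and likewise for Bob. For a product strategy the success probability on $G^\ell$ factorizes exactly; since the questions are i.i.d.\ across coordinates and your randomness is independent across coordinates, the expected success probability is $\big(\Exp[1-p_1]\big)^\ell$, a single-game quantity raised to the $\ell$-th power. This gives at best $\big(1-O(\sqrt{\delta\log(k/\delta)})\big)^\ell \approx 1-O(\ell\sqrt{\delta\log(k/\delta)})$, i.e.\ exactly the ``naive union bound'' dependence you say you want to avoid. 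No concentration argument on the $\delta_i$'s can repair this, because the expectation already factorizes; there is nothing to concentrate.

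The actual mechanism in~\cite{BarakHHRRS08} is the opposite of independent per-coordinate rounding. One applies a \emph{correlated sampling} procedure (in the style of Holenstein) to the \emph{entire} tensored SDP solution at once: Alice and Bob use shared randomness to sample $\mathbf a\in[k]^\ell$ and $\mathbf b\in[k]^\ell$ from the distributions $\|u_{\mathbf a}^{\mathbf s}\|^2$ and $\|v_{\mathbf b}^{\mathbf t}\|^2$ in a correlated way. The disagreement probability of correlated sampling is controlled by a Hellinger-type distance between these two product distributions, and it is precisely the \emph{square} of this distance that is (sub)additive under tensoring, yielding the $\sqrt{\ell\delta}$ dependence. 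The present paper itself points to this after Theorem~\ref{thm:mainunique}, noting that the rounded strategy ``is not a product strategy due to a `correlated sampling' step''; the same idea is what drives Lemma~\ref{lem:crounding}. Your sketch misses this correlated-sampling step, which is the heart of the argument.
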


\begin{program}
\caption{}\label{sdp:2}
\begin{tabular}{p{0.12 \textwidth}p{0.8 \textwidth}}
\textbf{Maximize:} & $\Exp_{(s,t,\pi)\sim G} \sum_{ab} \pi(a,b) \ip{u_a^s, v_b^t}$\\
\textbf{Subject to:} & $\|z\|=1$\\
& $\forall s,t,~\sum_a u_a^s = \sum_b v_b^t = z$\\
& $\forall s,t,~\forall a \neq b,~ \ip{u_a^s,u_b^s}=0$ and $\ip{v_a^t,v_b^t}=0$\\
& $\forall s,t,a,b,~\ip{u_a^s,v_b^t} \ge 0$\\
\end{tabular}
\end{program}

We now consider SDP\ref{sdp:2}. Notice the extra variable $z$,
the extra non-negativity constraints, and the extra $z$ constraints.
We clearly have that for any game $G$, $\omega_{SDP2}(G) \le \omega_{SDP1}(G)$.
Yet, as mentioned in~\cite{KempeRT:unique}, SDP2 still provides an upper bound
on the entangled value. Moreover, for unique games, this upper bound is almost tight.

\begin{lemma}[\cite{KempeRT:unique}]\label{lem:SDP2rounding}
Let $G$ be a unique game with $\omega_{SDP2}(G)=1-\delta$. Then $ 1-6 \delta \le \omega^*(G) \le  1-\delta$.
\end{lemma}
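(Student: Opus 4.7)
The upper bound $\omega^*(G) \le 1-\delta$ is the easy direction: I would check that SDP\ref{sdp:2} is a relaxation of the entangled value. Given any entangled strategy with shared state $|\psi\rangle$ and projective measurements $\{P_a^s\}, \{Q_b^t\}$, set $z = |\psi\rangle$, $u_a^s = (P_a^s \otimes I)|\psi\rangle$, and $v_b^t = (I \otimes Q_b^t)|\psi\rangle$ (passing to a real representation if needed, as in \cite{KempeRT:unique}). Then the sum constraints follow from $\sum_a P_a^s = I$; orthogonality from $P_a^s P_b^s = 0$; and most importantly $\ip{u_a^s, v_b^t} = \langle\psi|P_a^s \otimes Q_b^t|\psi\rangle = \Pr[a,b\mid s,t] \ge 0$, which is exactly the non-negativity constraint. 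The SDP objective equals the strategy's success probability exactly, so $\omega_{SDP2}(G) \ge \omega^*(G)$.

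For the lower bound $\omega^*(G) \ge 1-6\delta$, the plan is to round an SDP\ref{sdp:2} solution to an explicit entangled strategy. I would first extract structural properties of any SDP\ref{sdp:2} solution: (i) the squared norms $w_a^s := \|u_a^s\|^2$ form a probability distribution on $[k]$ for each $s$ (and similarly $w_b^t := \|v_b^t\|^2$), since $\sum_a u_a^s = z$ is an orthogonal decomposition of a unit vector; (ii) the quantities $p_{s,t}(a,b) := \ip{u_a^s, v_b^t}$ are non-negative and form a valid non-signaling joint distribution on $[k] \times [k]$ with marginals $w^s$ and $w^t$, since $\sum_b \ip{u_a^s,v_b^t} = \ip{u_a^s,z} = w_a^s$; (iii) for a unique game, the SDP value rewrites as $\Exp_{(s,t,\sigma)\sim G}[\sum_a p_{s,t}(a,\sigma(a))] = 1-\delta$, i.e., the coupling $p_{s,t}$ puts mass $1-\delta$ on average on ``correct'' pairs obeying $b=\sigma(a)$.

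The entangled strategy I would propose is: let $V$ be the ambient real Hilbert space, and have Alice and Bob share the maximally entangled state in $V \otimes V$. On question $s$ Alice performs the projective measurement with rank-one projectors $|\hat u_a^s\rangle\langle\hat u_a^s|$ (completed to a full measurement), and on $t$ Bob does the analogous measurement with $\{\hat v_b^t\}_b$. Because the state is maximally entangled, the joint probabilities come out proportional to squared inner products $|\ip{\hat u_a^s, \hat v_b^t}|^2$, so to relate them back to the linear SDP objective one uses that $\ip{u_a^s,v_b^t} \ge 0$ — this non-negativity lets one apply Cauchy--Schwarz or the elementary inequality $x \ge x^2$ for $x \in [0,1]$ to move between the linear and squared quantities without flipping signs. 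The $z$-consistency constraint ensures the marginals of this strategy are exactly the distributions $w^s$ and $w^t$, so it suffices to bound the loss in the joint agreement.

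The main technical hurdle I anticipate is pinning down the constant $6$. I expect it to arise from a two-step accounting: first compare the quantum joint distribution produced by the strategy to the non-signaling coupling $p_{s,t}$ with a loss of $O(\delta)$ using the structural properties above; then use the SDP bound $\Exp_{s,t,\sigma}[\sum_a p_{s,t}(a,\sigma(a))] = 1-\delta$ to conclude that the quantum winning probability is at least $1-6\delta$. The crucial feature that distinguishes this lemma from the weaker Lemma~\ref{lem:rounding} (which only yields $\omega^*(G) \ge 1-8\sqrt{\delta}$ from SDP\ref{sdp:1}) is exactly the presence of the $z$-consistency and non-negativity constraints in SDP\ref{sdp:2}, which allow a linear rather than square-root loss in the rounding.
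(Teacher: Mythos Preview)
The paper does not itself prove this lemma; it is quoted from \cite{KempeRT:unique}. So there is no in-paper argument to compare against, and the question is simply whether your sketch is sound.

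Your upper-bound direction is correct and is exactly the standard argument.

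Your lower-bound sketch, however, contains a genuine gap. You propose that Alice and Bob share the maximally entangled state on $V\otimes V$ (with $V=\R^d$ the SDP ambient space) and each measure in the orthonormal system $\{\hat u_a^s\}_a$ (resp.\ $\{\hat v_b^t\}_b$) completed to a basis. For this strategy the outcome probabilities are $\Pr[a,b\mid s,t]=\tfrac{1}{d}\,|\ip{\hat u_a^s,\hat v_b^t}|^2$, and hence the marginal of Alice's outcome is exactly $1/d$ for every $a$ (this is the basic fact that the reduced state of a maximally entangled state is $I/d$). It is \emph{not} $w_a^s=\|u_a^s\|^2$ as you assert; normalizing to $\hat u_a^s$ discards the norm information, and the $z$-constraint plays no role in the marginals of this particular measurement. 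Relatedly, the $d-k$ ``completion'' outcomes carry total probability $(d-k)/d$, and since the SDP dimension $d$ can far exceed the alphabet size $k$, your strategy as written can have winning probability at most $k/d$, which need not be anywhere near $1-O(\delta)$.

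A second problem is the direction of your inequality. Knowing $\ip{u_a^s,v_b^t}\ge 0$ and $x^2\le x$ for $x\in[0,1]$ only tells you the \emph{squared} inner products are \emph{smaller} than the linear ones, so it lower-bounds the SDP objective by the quantum success probability rather than the other way around. The actual KRT analysis has to work harder here: one needs an argument that converts closeness of the $u$ and $v$ systems (which is what $1-\delta$ encodes, via $\sum_a\ip{u_a^s,v_{\sigma(a)}^t}$ being close to $1$) into closeness of the induced measurement outcomes, and this is where both the orthogonal-decomposition structure $\sum_a u_a^s=z$ and the nonnegativity are used in a more delicate way than ``$x\ge x^2$''. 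In short, the high-level idea of measuring a shared entangled state in bases derived from the SDP vectors is in the right spirit, but the specific strategy and the accounting you describe do not establish $\omega^*(G)\ge 1-6\delta$.
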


It was not known whether SDP2 satisfies the tensorization property.

\section{The line game and its non-signaling value}\label{sec:line}

We now describe and analyze our first counterexample, the line game (see Figure~\ref{fig:linegame}a for
an illustration).

\begin{figure}[h]
\begin{centering}
\includegraphics[width=0.7\textwidth]{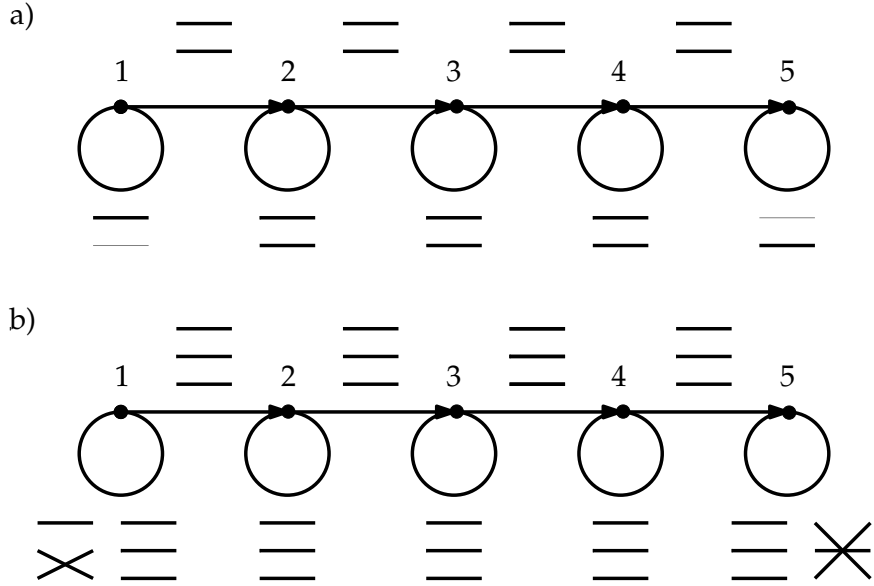}
\caption{The line game (top) and the unique line game (bottom) for $n=5$. }
\label{fig:linegame}
\par\end{centering}
\end{figure}

\begin{definition}[Line game]
Consider a path with vertices $\{1,\ldots,n\}$ with edges connecting any two successive nodes,
as well as a loop on each vertex (so the total number of edges is $2n-1$).
The \emph{line game} $G_L$ of length $n$ is a game with question set $Q=[n]$,
and answer size $k=2$, in which the verifier chooses a triple $(s,t,\pi)$
as follows. He first chooses an edge with endpoints $s \le t$ uniformly
among the $2n-1$ edges. The constraint $\pi$ is set to be equality for all edges,
except for the two loops at the ends, i.e., except $s=t=1$ or $s=t=n$.
In the former case, the constraint $\pi$ forces $a=b=1$ and in
the latter case it forces $a=b=2$.
\end{definition}
Note that the line game $G_L$ is not a unique game due to the non-unique constraints
on both ends of the line.

\begin{theorem}\label{thm:nsvalueL}
$\omega(G_L)=\omega^*(G_L)=\omegans(G_L) = 1-\frac{1}{2n-1}$.
\end{theorem}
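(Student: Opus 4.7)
The plan is to prove two matching bounds and use $\omega(G_L) \le \omega^*(G_L) \le \omegans(G_L)$ to collapse the three quantities. The lower bound $\omega(G_L) \ge 1 - \frac{1}{2n-1}$ will follow from an explicit classical (hence non-signaling) strategy: both provers answer $1$ on every vertex $s \in \{1,\dots,k\}$ and $2$ on every vertex $s \in \{k+1,\dots,n\}$ for any fixed threshold $k \in \{1,\dots,n-1\}$. All $2n-2$ edges other than the single edge $(k,k+1)$ are satisfied (the two end-loops by choice of $k$, every other loop trivially, every non-loop edge because adjacent vertices get equal labels except at the cut), so the provers win with probability exactly $1-\frac{1}{2n-1}$.

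The substantive part is showing $\omegans(G_L) \le 1 - \frac{1}{2n-1}$. Fix any non-signaling strategy $\{p_{s,t}\}$ and let $p_s := A_s(1)$, $q_s := B_s(1)$ denote the two families of marginals (well-defined by non-signaling). I would lower-bound the loss edge by edge using only the marginals: for any equality edge $(s,t)$, the joint probability of $a=b$ is at most $\sum_c \min(A_s(c),B_t(c)) = 1 - \tfrac12\|A_s - B_t\|_1$, so for binary answers the loss is at least $|A_s(1) - B_t(1)|$. Similarly the loss on the loop at $1$ is at least $1 - \min(p_1,q_1) \ge 1 - p_1$ and on the loop at $n$ is at least $\max(p_n,q_n) \ge q_n$. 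Multiplying the winning probability by $2n-1$, the total loss is at least
\[
T \;\ge\; (1-p_1) \;+\; \sum_{s=1}^{n-1}|p_s - q_{s+1}| \;+\; \sum_{s=2}^{n-1}|q_s - p_s| \;+\; q_n.
\]

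The key step is to apply the triangle inequality along the chain of values $p_1, q_2, p_2, q_3, p_3, \dots, p_{n-1}, q_n$: the middle two sums are exactly the telescoping total variation along this chain, so they dominate $|p_1 - q_n|$. Hence $T \ge (1-p_1) + |p_1 - q_n| + q_n \ge (1-p_1) + (p_1 - q_n) + q_n = 1$, giving the loss bound $\tfrac{1}{2n-1}$ and the desired upper bound on $\omegans(G_L)$.

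The only potential obstacle is organizing the end-loops correctly: unlike the interior loops, the constraints at $1$ and $n$ are not pure equality but force a specific answer, so one must carefully verify that the non-signaling marginals give losses $1-\min(p_1,q_1)$ and $\max(p_n,q_n)$ respectively (and, importantly, the weaker consequences $\ge 1-p_1$ and $\ge q_n$ are the right ones to feed into the telescoping chain that terminates at $p_1$ on one side and $q_n$ on the other). Once this matching is set up, everything collapses by a single application of the triangle inequality.
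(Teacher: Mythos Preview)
Your proof is correct and follows essentially the same approach as the paper. The paper's lower bound uses the constant strategy ``always answer $1$'' (failing only on the loop at $n$) rather than your threshold strategy (failing on the edge $(k,k+1)$), and for the upper bound the paper phrases the per-edge loss via total variation distance $\Delta(A_s,B_t)$ rather than your binary specialization $|p_s-q_t|$, but the heart of both arguments is the same telescoping triangle inequality along the chain of marginals from vertex $1$ to vertex $n$.
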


\begin{proof}
First, notice that the success probability of the classical strategy
in which Alice and Bob always answer $1$ is $1-\frac{1}{2n-1}$.
Hence, $1-\frac{1}{2n-1} \le \omega(G_L) \le \omega^*(G_L)\le \omegans(G_L)$
and it remains to bound $\omegans(G_L)$ from above. For this, we use the following simple claim.
\begin{claim}\label{clm:delta}
For any $k \ge 1$, $a,b \in [k]$, any permutation $\sigma$ on $[k]$ and any probability distribution
$p$ on $[k] \times [k]$ with marginal distributions $A(a)$ and $B(b)$,
\begin{equation*}
\Pr_{(a,b) \sim p}[a=\sigma(b)]\leq 1-\Delta(A,\sigma(B)),
\end{equation*}
where $\Delta(A,\sigma(B))=\frac{1}{2}\sum_{a \in [k]}|A(a)-B(\sigma(a))|$ is the total variation distance between
$A$ and $\sigma(B)$. Moreover, for any marginal distributions $A$ and $B$ there exists a distribution $p$
for which equality is achieved.
\end{claim}
\begin{proof}
For simplicity assume that $\sigma$ is the identity permutation;
the general case follows by permuting the answers. Note that $p(a,a) \leq
\min(A(a),B(a))=\frac{1}{2}(A(a)+B(a)-|A(a)-B(a)|)$. Hence $$\Pr_{(a,b)\sim p}[a=b]=\sum_{a \in [k]} p(a,a)\leq
\frac{1}{2}\sum_{a \in [k]} \Big( A(a)+B(a)-|A(a)-B(a)| \Big) =1-\Delta(A,B).$$  To construct a $p$ such that equality holds, we can
simply set $p(a,a)=\min(A(a),B(a))$. It is easy to see that it is possible to complete this to a probability
distribution.
\end{proof}
We now bound the non-signaling value of $G_L$ by arguing about the marginal distributions of the provers' strategy.
Let $\{p_{s,t}|s,t \in [n]\}$ be an arbitrary non-signaling strategy,
let $A_1,\ldots , A_n$ be the marginal distributions on Alice's answers and $B_1,\ldots
,B_n$ the marginal distributions for Bob, as in Def.~\ref{def:nonsignaling}. Note that  except for question pairs $(1,1)$, $(n,n)$
all constraints are equality constraints. Hence, using Claim~\ref{clm:delta} and denoting the number of edges by $m=2n-1$,
the winning probability for this strategy is at most
\begin{align}
&1-\frac{2}{m}-\frac{1}{m}\left(\sum_{s=2}^{n-1}\Delta(A_s,B_s)+\sum_{s=1}^{n-1}
\Delta(A_s,B_{s+1})\right)+\frac{1}{m}(p_{1,1}(1,1)+p_{n,n}(2,2))\nonumber \\
&\leq 1-\frac{2}{m}- \frac{1}{m} \Delta(A_1,B_n) +\frac{1}{m} (p_{1,1}(1,1)+p_{n,n}(2,2)) \nonumber \\
&= 1-\frac{2}{m} +  \frac{1}{m} \big( p_{1,1}(1,1)+p_{n,n}(2,2) - \Delta(A_1,B_n) \big), \label{eq:inthere}
\end{align}
where in the first inequality we used the triangle inequality for total variation distance $\Delta$.
We complete the proof by noting that $\Delta(A_1,B_n) \ge A_1(1) + B_n(2) - 1$, and
recalling that by definition $A_1(1) \ge p_{1,1}(1,1)$ and $B_n(2) \ge p_{n,n}(2,2)$.
\end{proof}

In order to show that the line game violates strong parallel repetition we will modify it to a unique game $G_{uL}$
by increasing the alphabet size to $3$ and slightly changing the constraints.
We will shortly see that $G_L$ and $G_{uL}$ have essentially the same
non-signaling value and behave similarly under parallel repetition.

\begin{definition}[Unique line game]
Consider a path with vertices $\{1,\ldots,n\}$ with edges connecting any two successive nodes,
as well as a loop on each vertex (so the total number of edges is $2n-1$).
The \emph{unique line game} $G_{uL}$ of length $n$ is a game with question set $Q=[n]$,
and answer size $k=3$, in which the verifier chooses a triple $(s,t,\sigma)$
as follows. He first chooses an edge with endpoints $s \le t$ uniformly
among the $2n-1$ edges. The permutation $\sigma$ is set to be the identity for all edges,
unless $s=t=1$ or $s=t=n$.
In the former case, $\sigma$ is chosen to be the identity with probability half
and the permutation that switches 2 and 3 with probability half;
in the latter case, $\sigma$ is chosen to be the identity with probability half
and the permutation that switches 1 and 3 with probability half.
\end{definition}

\begin{theorem}\label{thm:nsvalueuL}
$\omega(G_{uL})=\omega^*(G_{uL})=\omegans(G_{uL}) = 1-\frac{1}{2(2n-1)}$.
\end{theorem}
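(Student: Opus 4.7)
The plan is to follow the template of the proof of Theorem~\ref{thm:nsvalueL}, adapting the argument at the two endpoint loops to account for the now-probabilistic constraints. For the lower bound I would exhibit the classical strategy in which both provers always answer $1$: every non-loop edge and every middle loop is satisfied, the loop on vertex $1$ is satisfied under both the identity and $(2\;3)$ (since $(2\;3)(1)=1$), and only the loop on vertex $n$ under the permutation $(1\;3)$ is violated. The unique losing event has probability $\tfrac{1}{2n-1}\cdot \tfrac{1}{2}=\tfrac{1}{2(2n-1)}$, so $\omega(G_{uL})\ge 1-\tfrac{1}{2(2n-1)}$, and the same lower bound holds for $\omega^*$ and $\omegans$.

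For the upper bound I would take an arbitrary non-signaling strategy $\{p_{s,t}\}$ with marginals $A_s,B_t$. By Claim~\ref{clm:delta}, the winning probability on each non-loop edge $(s,s+1)$ is at most $1-\Delta(A_s,B_{s+1})$ and on each middle loop $(s,s)$, $2\le s\le n-1$, at most $1-\Delta(A_s,B_s)$. For the loop on vertex $1$ the pairs accepted under \emph{both} the identity and $(2\;3)$ are exactly $\{(1,1)\}$ (since $1$ is the unique fixed point of $(2\;3)$), while $(2,2),(3,3),(2,3),(3,2)$ are each accepted under exactly one of the two, so the winning probability equals
$$W_1 \;=\; p_{1,1}(1,1) + \tfrac{1}{2}\Pr\nolimits_{p_{1,1}}\!\big[(a,b)\in\{2,3\}^2\big].$$
Using $p_{1,1}(1,1)\le A_1(1)$ and $\Pr\nolimits_{p_{1,1}}[(a,b)\in\{2,3\}^2]\le A_1(\{2,3\})=1-A_1(1)$ I obtain $W_1\le \tfrac{1}{2}(1+A_1(1))$. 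A symmetric calculation at vertex $n$ (where $2$ is the unique fixed point of $(1\;3)$) gives $W_n\le \tfrac{1}{2}(1+B_n(2))$.

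Summing the $2n-1$ edge contributions and chaining the interior TV-distance terms by the triangle inequality along $A_1\to B_2\to A_2\to\cdots\to A_{n-1}\to B_n$, I arrive at
$$(2n-1)\,\omegans(G_{uL})\;\le\;(2n-2)+\tfrac{1}{2}A_1(1)+\tfrac{1}{2}B_n(2)-\Delta(A_1,B_n).$$
It then suffices to verify the key inequality $\tfrac{1}{2}A_1(1)+\tfrac{1}{2}B_n(2)-\Delta(A_1,B_n)\le \tfrac{1}{2}$, which follows from
$$\Delta(A_1,B_n)\;\ge\;\max\{0,\,A_1(1)-B_n(1)\}\;\ge\;\max\{0,\,A_1(1)+B_n(2)-1\}$$
(the second step uses $B_n(1)\le 1-B_n(2)$) by a short case split on whether $A_1(1)+B_n(2)$ exceeds $1$.

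The main obstacle is pinning down the correct sharp bound $W_1\le \tfrac{1}{2}(1+A_1(1))$ at the endpoint loops: blindly averaging the two bounds from Claim~\ref{clm:delta} (one per permutation in the mixture) is too weak, and the sharper bound exploiting the unique common fixed point of the two permutations is precisely what yields the extra factor of $\tfrac{1}{2}$ that distinguishes $1-\tfrac{1}{2(2n-1)}$ from the weaker $1-\tfrac{1}{2n-1}$ obtained for the line game $G_L$ in Theorem~\ref{thm:nsvalueL}.
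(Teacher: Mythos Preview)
Your proof is correct. The lower bound via the constant-$1$ strategy matches the paper exactly, and your direct upper-bound argument---bounding the endpoint loop contributions by $W_1\le\tfrac12(1+A_1(1))$ and $W_n\le\tfrac12(1+B_n(2))$ via the unique common fixed point, then chaining the interior total-variation terms and finishing with the case split on $A_1(1)+B_n(2)$---goes through cleanly and yields the exact value $1-\tfrac{1}{2(2n-1)}$.

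The paper actually mentions your route (``repeating the proof of Thm.~\ref{thm:nsvalueL} with minor modifications'') but chooses to write out a different one: it reduces to Theorem~\ref{thm:nsvalueL} by merging answers $2$ and $3$ to obtain an alphabet-$2$ non-signaling strategy $\tilde p$ for $G_L$, and then compares, edge by edge, the $G_{uL}$-acceptance of $p$ with the $G_L$-acceptance of $\tilde p$. On identity edges merging can only help; on the endpoint loops one checks that the $G_{uL}$-acceptance is at most the average of $1$ and the $G_L$-acceptance. Summing gives $\omegans(G_{uL})\le\tfrac12(1+\omegans(G_L))=1-\tfrac{1}{2(2n-1)}$. (The inequality direction as printed in the paper appears to be inverted; the corollary one actually wants is $\omegans(G_{uL})\le\tfrac12(1+\omegans(G_L))$, not $\omegans(G_{uL})\le 2\omegans(G_L)-1$.) The reduction is slicker once Theorem~\ref{thm:nsvalueL} is in hand, but your direct argument is self-contained and makes explicit exactly where the extra factor of $\tfrac12$ comes from---namely, that at each endpoint only the unique fixed point survives both permutations, so the ``always accepted'' mass is at most $A_1(1)$ (resp.\ $B_n(2)$) while everything else is accepted with probability at most $\tfrac12$.
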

\begin{proof}
First, the strategy that assigns answer $1$ to all questions achieves winning probability $1-\frac{1}{2(2n-1)}$.
The upper bound can be shown by repeating the proof of Thm.~\ref{thm:nsvalueL} with minor
modifications. Instead, let us show how to obtain the upper bound as a corollary to
Thm.~\ref{thm:nsvalueL}. Let $\{p_{s,t}|s,t \in [n]\}$ be an arbitrary non-signaling strategy
for $G_{uL}$, and consider the strategy obtained by mapping answer 3 to 2. More precisely,
define $\{\tilde{p}_{s,t}|s,t \in [n]\}$ as the strategy with alphabet size $2$
defined by $\tilde{p}_{s,t}(1,1)=p_{s,t}(1,1)$, $\tilde{p}_{s,t}(1,2)=p_{s,t}(1,2)+p_{s,t}(1,3)$,
$\tilde{p}_{s,t}(2,1)=p_{s,t}(2,1)+p_{s,t}(3,1)$, and
$\tilde{p}_{s,t}(2,2)=p_{s,t}(2,2)+p_{s,t}(3,2)+p_{s,t}(2,3)+p_{s,t}(3,3)$.
Then it is easy to check that $\tilde{p}$ is also a non-signaling strategy
and that moreover, its value under the game $G_{L}$ is at least the average
between 1 and the value of $p$ under $G_{uL}$. Hence
$\omegans(G_{uL}) \le 2\omegans(G_{L}) - 1$, as desired.
\end{proof}

\begin{remark}\label{rem:paralleledges}
Notice that the unique line game uses probabilistic predicates, i.e., there are questions
(namely, the two end loops) to which more than one predicate is associated. It is not
difficult to avoid these probabilistic predicates by replacing the end loops with small gadgets,
while keeping the \emph{classical} and \emph{entangled} values of the game as well as those of the repeated
game more or less the same, hence leading to a counterexample to strong parallel repetition
using unique games with deterministic predicates
(namely, just add one extra vertex at each end, call it $1'$ and $n'$ and add equality constraints
from $1$ to $1'$ and from $1'$ to $1'$, as well as a constraint that switches $2$ and $3$ from $1'$ to $1$, and
analogous modification for $n'$). However, note that it is impossible
to obtain a counterexample to strong parallel repetition for the \emph{non-signaling} value
that is both unique and uses deterministic predicates. The reason is
that \emph{any} unique game with deterministic predicates has non-signaling
value $1$: simply choose for each question pair $(s,t)$ the
distribution $p_{s,t}(a,b)=\frac{1}{k}$ if $a=\sigma_{st}(b)$ and $p_{s,t}(a,b)=0$ otherwise, where
$\sigma_{st}$ is the unique permutation associated with $(s,t)$. This strategy is
non-signaling, as all its marginal distributions are uniform. Hence any {\em
unique} game that gives a counterexample to strong parallel repetition in the non-signaling case
must use probabilistic predicates.
\end{remark}

\section{Parallel repetition of the line game}\label{sec:linerepeat}

We now proceed to show that strong parallel repetition holds neither for $G_{uL}$ nor for $G_L$.
We will show this by
first proving a general connection for unique games between the value of SDP\ref{sdp:1} and the repeated entangled
value of the game. We emphasize that the following construction can also be presented more explicitly
without resorting to SDPs; we feel, however, that the connection to SDPs gives much more insight into
the nature of parallel repetition, and might also make it easier to extend our result to other settings.

\begin{theorem}\label{thm:sdp1amortized}
For any unique game $G$, if $\omega_{SDP1}(G)=1-\delta$ then

 (i) for all $\ell \geq 1$ we have
$1-8\sqrt{\ell\delta} \le \omega^*(G^\ell) \leq (1-\delta)^\ell$ and

(ii)
for all $\ell>\frac{1}{\delta}$ we have $ (1-c\delta)^{\ell} \leq \omega^*(G^\ell) \leq (1-\delta)^\ell$ for some
universal constant $c>0$.\\ In particular, the amortized entangled value is $\overline{\omega^*}(G) = 1-\Theta(\delta)$.
\end{theorem}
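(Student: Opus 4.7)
The plan is to combine the tensorization of SDP\ref{sdp:1} (Lemma~\ref{lem:tensor}) with the rounding result for unique games (Lemma~\ref{lem:rounding}), using crucially that $G^\ell$ is itself a unique game on alphabet $[k]^\ell$ (the coordinatewise product of permutations is again a permutation, and the product predicate evaluates to $1$ iff the corresponding product permutation maps one answer tuple to the other), so that Lemma~\ref{lem:rounding} applies to the repeated game as well. The upper bound in both (i) and (ii) is then immediate, since SDP\ref{sdp:1} upper-bounds $\omega^*$ and tensorizes: $\omega^*(G^\ell) \le \omega_{SDP1}(G^\ell) = \omega_{SDP1}(G)^\ell = (1-\delta)^\ell$. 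For the lower bound in (i), I would apply Lemma~\ref{lem:rounding} directly to $G^\ell$, whose SDP\ref{sdp:1} value is $(1-\delta)^\ell = 1-\delta_\ell$ with $\delta_\ell := 1-(1-\delta)^\ell \le \ell\delta$. The lemma then yields $\omega^*(G^\ell) \ge 1-8\sqrt{\delta_\ell} \ge 1-8\sqrt{\ell\delta}$.

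For the lower bound in (ii), the bound from (i) is vacuous once $\ell\delta$ exceeds a constant, so I would use a blocking argument. Fix a block size $\ell_0 = \lfloor 1/(256\delta)\rfloor = \Theta(1/\delta)$, chosen small enough that $8\sqrt{\ell_0\delta} \le 1/2$; then by (i), $\omega^*(G^{\ell_0}) \ge 1/2$. Partitioning the $\ell$ copies into $\lceil \ell/\ell_0\rceil$ blocks of size at most $\ell_0$ each and playing independent optimal $G^{\ell_0}$-strategies on each block (using the standard fact that $\omega^*$ is monotone non-increasing in the number of repetitions, so $\omega^*(G^r) \ge \omega^*(G^{\ell_0})$ whenever $r \le \ell_0$), one gets
$$\omega^*(G^\ell) \;\ge\; \omega^*(G^{\ell_0})^{\lceil \ell/\ell_0\rceil} \;\ge\; (1/2)^{\ell/\ell_0 + 1} \;=\; \tfrac{1}{2}\bigl(2^{-1/\ell_0}\bigr)^\ell.$$
Since $1/\ell_0 = \Theta(\delta)$, the base satisfies $2^{-1/\ell_0} \ge 1-c'\delta$ for some universal $c'$ (using $e^{-x}\ge 1-x$), and the leading $1/2$ factor can be absorbed into a slightly larger universal constant $c$ thanks to the hypothesis $\ell > 1/\delta$, yielding $\omega^*(G^\ell) \ge (1-c\delta)^\ell$. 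The amortized statement $\overline{\omega^*}(G) = 1-\Theta(\delta)$ follows immediately by taking $\ell \to \infty$ in the matching lower and upper bounds.

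The main (minor) obstacle is calibrating the block size $\ell_0$ in (ii): it must sit in the narrow window where Lemma~\ref{lem:rounding} still delivers a constant lower bound on $\omega^*(G^{\ell_0})$, yet be large enough ($\ell_0 = \Omega(1/\delta)$) for the per-step decay rate $2^{-1/\ell_0}$ to be $1 - O(\delta)$ and thus match the SDP upper bound up to the universal constant $c$. Apart from this, the proof is essentially a mechanical assembly of Lemmas~\ref{lem:tensor} and~\ref{lem:rounding}, together with the observation that the unique-game structure is preserved under parallel repetition.
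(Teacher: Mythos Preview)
Your proposal is correct and follows essentially the same approach as the paper: combine the tensorization of SDP\ref{sdp:1} (Lemma~\ref{lem:tensor}) with the rounding Lemma~\ref{lem:rounding} applied to the unique game $G^\ell$ for part~(i), and then a blocking argument with blocks of size $\Theta(1/\delta)$ for part~(ii). The paper's proof is terser (it uses block size $1/(100\delta)$ and omits the details you spelled out about the last partial block and absorbing constants), but the argument is identical.
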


Compare this to the classical case (Lemma~\ref{lem:crounding}), where we have a dependence on the alphabet size
(as well as an extra $\log$ factor). In the
entangled case, SDP\ref{sdp:1} gives a tight estimate on the amortized entangled value up to a universal constant.
\begin{proof}
We combine several statements from~\cite{KempeRT:unique}. By Lemma \ref{lem:tensor},
$\omega_{SDP1}(G^\ell)=\omega_{SDP1}(G)^\ell=(1-\delta)^\ell\geq 1-\ell
\delta$. We now use the quantum rounding of \cite{KempeRT:unique}, Lemma \ref{lem:rounding} to obtain an entangled
strategy for $G^\ell$ with value at least $1-8 \sqrt{\ell \delta}$, showing part (i). Part (ii) follows by
partitioning the $\ell$ repetitions into blocks of size $\frac{1}{100\delta}$ and playing the
strategy of (i) on each block independently.
\end{proof}

Hence, in order to analyze the repeated entangled value of $G_{uL}$ it suffices
to analyze its SDP1 value.

\begin{lemma}\label{lem:sdpline}
$\omega_{SDP1}(G_{uL})\geq 1-\frac{2}{n^2}$.
\end{lemma}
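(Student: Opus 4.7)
The plan is to exhibit an explicit feasible solution to SDP\ref{sdp:1}. The design principle is dictated by the two end-loop constraints: at $s=t=1$ the only answer pair accepted with probability one by \emph{both} the identity and the swap $\tau_{23}$ is $(1,1)$, so both provers' SDP ``weight'' at vertex $1$ should be concentrated entirely on the ``answer $1$'' direction; symmetrically, at vertex $n$ all weight should sit on the ``answer $2$'' direction. For vertices in between the weight should shift smoothly between these two extremes so that each consecutive identity edge incurs only $O(1/n^2)$ loss. The third answer can simply carry weight zero everywhere, which suffices because $\tau_{23}$ fixes $1$ and $\tau_{13}$ fixes $2$.

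Concretely, I would pick $\theta = \pi/(2(n-1))$, set $\phi_s = (s-1)\theta$, and work in $\mathbb{R}^2$ with basis $e_1,e_2$. Define
\[
 u_1^s = v_1^s = \cos\phi_s\cdot e_1, \qquad u_2^s = v_2^s = \sin\phi_s\cdot e_2, \qquad u_3^s = v_3^s = 0.
\]
Feasibility is immediate: the three vectors at each vertex are pairwise orthogonal (the nonzero ones lie along $e_1$ and $e_2$; the third is zero), and $\sum_a \|u_a^s\|^2 = \cos^2\phi_s + \sin^2\phi_s + 0 = 1$, and similarly for the $v$'s.

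Next I would evaluate the objective edge by edge. For an internal loop $s=t=j$ the identity contribution is $\cos^2\phi_j + \sin^2\phi_j = 1$. For the left end loop, $\phi_1=0$ forces $u_1^1=v_1^1=e_1$ and all other vectors to zero, whereupon both the identity summand and the $\tau_{23}$ summand $\langle u_1^1, v_1^1\rangle + \langle u_2^1, v_3^1\rangle + \langle u_3^1, v_2^1\rangle$ equal $1$; the right end loop is symmetric with $\tau_{13}$. For a consecutive edge $(s,s+1)$ the identity contribution collapses via the cosine addition formula to $\cos\phi_s\cos\phi_{s+1} + \sin\phi_s\sin\phi_{s+1} = \cos\theta$. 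Averaging uniformly over the $2n-1$ edges,
\[
 \omega_{SDP1}(G_{uL}) \;\geq\; \frac{n + (n-1)\cos\theta}{2n-1} \;=\; 1 - \frac{(n-1)(1-\cos\theta)}{2n-1}.
\]
Plugging in $1-\cos\theta \leq \theta^2/2$ with $\theta = \pi/(2(n-1))$ bounds the loss by $\pi^2/\bigl(8(n-1)(2n-1)\bigr)$, and a one-line check that $\pi^2 n^2 \leq 16(n-1)(2n-1)$ for all $n\geq 2$ yields the desired $\omega_{SDP1}(G_{uL}) \geq 1-2/n^2$.

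There is no serious obstacle here; the mildly subtle point is recognizing that the SDP\ref{sdp:1} normalization $\sum_a\|u_a^s\|^2 = 1$ lets the \emph{distribution} of norm across the three answer indices vary from vertex to vertex, which is what enables a simple two-dimensional rotation to reconcile the two otherwise incompatible endpoint constraints at the cost of only $O(1/n^2)$ per consecutive edge.
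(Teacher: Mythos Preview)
Your proposal is correct and is essentially identical to the paper's own proof: the same two-dimensional vectors $u_a^s=v_a^s$ with $u_1^s=\cos\phi_s\cdot e_1$, $u_2^s=\sin\phi_s\cdot e_2$, $u_3^s=0$ for $\phi_s=\tfrac{s-1}{n-1}\tfrac{\pi}{2}$, the same feasibility check, and the same computation yielding value $\tfrac{n}{2n-1}+\tfrac{n-1}{2n-1}\cos\tfrac{\pi}{2(n-1)}\ge 1-\tfrac{\pi^2}{8(n-1)(2n-1)}$. You are in fact slightly more explicit than the paper in spelling out the final numerical inequality $\pi^2 n^2 \le 16(n-1)(2n-1)$ for $n\ge 2$.
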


\begin{proof}
We construct a solution $\{u^s_a\},\{v^t_b\} \in \mathbb{R}^2$ for SDP1$(G_{uL})$ in the following way, as illustrated
in Fig.~\ref{fig:lineSDP}a.

\begin{figure}[h]
\begin{centering}
\includegraphics[width=0.7\textwidth]{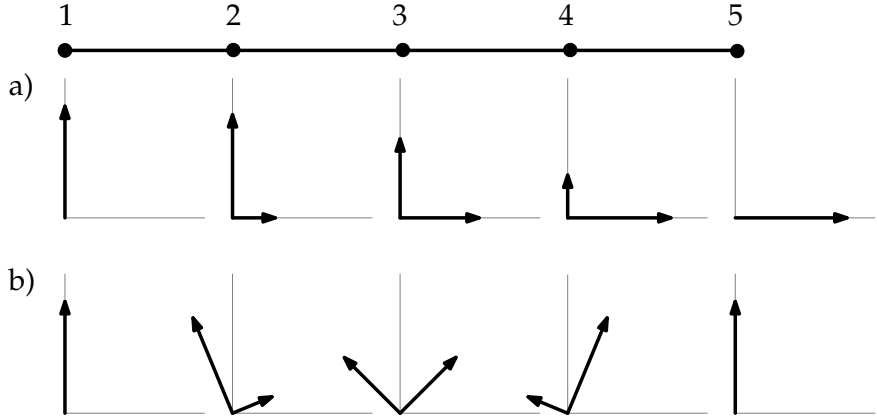}
\caption{Two SDP solutions} \label{fig:lineSDP}
\par\end{centering}
\end{figure}
\begin{equation}\label{eq:SDP1solution}
\forall s \in \{1,\ldots ,n\} \quad u^s_1=v^s_1=\left(\begin{array}{c}0 \\ \cos{\frac{s-1}{n-1}\frac{\pi}{2}}
\end{array} \right), \quad u^s_2=v^s_2=\left(\begin{array}{c}\sin{\frac{s-1}{n-1} \frac{\pi}{2}} \\ 0  \end{array}
\right),  \quad u^s_3=v^s_3=0
\end{equation}
Clearly $\ip{{u^s_a},{u^s_b}}=0$ for $a \neq b$ and $\sum_a \ip{u^s_a,u^s_a}=1$ and similarly for the $v$ vectors, so
our solution for SDP1($G_{uL})$ is {\em feasible}. Since the $u$ vectors are equal to the $v$ vectors, it is easy to
compute its value
\begin{align*}
&\Exp_{(s,t,\pi)\sim G} \sum_{ab} \pi(a,b) \ip{u_a^s, v_b^t}  \\
&=\frac{1}{2n-1}\Big(\ip{u_1^1, v_1^1}+\ip{u_2^n,v_2^n}+\sum_{s=2}^{n-1}\Big(\ip{u_1^s, v_1^s}+\ip{u_2^s, v_2^s}  \Big)\Big)
+ \frac{1}{2n-1} \sum_{s=1}^{n-1} \Big( \ip{u_1^s, v_1^{s+1}}+\ip{u_2^s, v_2^{s+1}}   \Big) \\
&=\frac{n}{2n-1}+\frac{1}{2n-1}\sum_{s=1}^{n-1} \Big( \cos \Big(\frac{\pi}{2}\frac{s-1}{n-1}\Big) \cos \Big(\frac{\pi}{2}\frac{s}{n-1}\Big)+\sin
\Big(\frac{\pi}{2}\frac{s-1}{n-1}\Big) \sin \Big(\frac{\pi}{2}\frac{s}{n-1}\Big) \Big) \\
&=\frac{n}{2n-1}+\frac{n-1}{2n-1}\cos \frac{\pi}{2(n-1)}\geq
\frac{n}{2n-1}+\frac{n-1}{2n-1}(1-\frac{\pi^2}{8(n-1)^2})=1-\frac{\pi^2}{8(n-1)(2n-1)},
\end{align*}
which proves the lemma for all $n \geq 2$.
\end{proof}

Combining the above lemma with Lemma~\ref{lem:rounding}, we see that in fact $\omega_{SDP1}(G_{uL}) =  1-\Theta(\frac{1}{n^2})$.
Moreover, Lemma~\ref{lem:SDP2rounding} shows that $\omega_{SDP2}(G_{uL}) =  1-\Theta(\frac{1}{n})$.
Hence we obtain a quadratic gap between SDP1 and SDP2. Also note that the SDP1 solution above
obeys the non-negativity constraint in SDP2: the inner products of any two vectors is non-negative.
In fact we can modify the solution to a solution with similar value, so that it obeys the $z$-constraint of SDP2, at the expense of
violating the non-negativity constraint, as shown in Fig.~\ref{fig:lineSDP}b. 
Hence our quadratic gap also holds between SDP2 and the two possible strengthenings of SDP1.

Combining Theorem~\ref{thm:sdp1amortized} with the above lemma, we obtain that
for all $\ell \geq n^2$, $\omega^*(G_{uL}^\ell) \geq (1-O(1/n^2))^\ell$.
In fact, the same lower bound also holds for the \emph{classical} value.
The reason for this is that the strategy constructed in Lemma~\ref{lem:rounding}
uses a shared maximally entangled state, and performs a measurement on it
in an orthonormal basis derived from the SDP vectors. Since all the vectors in
the SDP solution of Lemma~\ref{lem:sdpline} are in the same orthonormal basis
(and the same is true for the resulting SDP solution of $G_{uL}^\ell$),
we obtain that the strategy constructed in Lemma~\ref{lem:rounding} is
in fact a classical strategy. A final technical remark is that even though
we obtained the above strategy by using a tensored SDP solution, the
strategy itself is not a product strategy due to a ``correlated
sampling" step performed as part of the proof of Lemma~\ref{lem:rounding}.
We summarize this discussion in the following
theorem.

\begin{theorem}\label{thm:mainunique}
For $\ell \geq n^2$, $\omegans(G_{uL}^\ell) \ge \omega^*(G_{uL}^\ell) \ge  \omega(G_{uL}^\ell) \ge (1-O(1/n^2))^\ell$.
\end{theorem}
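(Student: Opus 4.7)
The plan is to chain Lemma~\ref{lem:sdpline} with Theorem~\ref{thm:sdp1amortized} to establish the entangled lower bound, deduce the non-signaling bound for free, and then argue separately that the quantum strategy coming out of Lemma~\ref{lem:rounding} actually collapses to a classical strategy for this particular SDP solution.

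First, I plug $\omega_{SDP1}(G_{uL}) \ge 1 - 2/n^2$ from Lemma~\ref{lem:sdpline} into Theorem~\ref{thm:sdp1amortized}(ii). With $\delta = 2/n^2$, the hypothesis $\ell > 1/\delta = n^2/2$ holds for every $\ell \geq n^2$, so the conclusion gives $\omega^*(G_{uL}^\ell) \geq (1-c\delta)^\ell = (1-O(1/n^2))^\ell$. The non-signaling bound $\omegans(G_{uL}^\ell) \geq (1-O(1/n^2))^\ell$ follows immediately from the trivial inequality $\omegans \geq \omega^*$, so only the classical lower bound on $\omega(G_{uL}^\ell)$ remains to be argued.

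To promote the entangled bound to a classical one, I inspect the explicit solution built in Lemma~\ref{lem:sdpline}: every vector $u^s_a = v^s_a$ lies along one of the two coordinate axes of $\R^2$, with only its length depending on the question $s$. By Lemma~\ref{lem:tensor} the tensored SDP solution for $G_{uL}^\ell$ inherits this structure, with every one of its vectors a rescaling of an element of the standard orthonormal basis of $(\R^2)^{\otimes \ell}$. The strategy produced by Lemma~\ref{lem:rounding} has Alice and Bob measure a shared maximally entangled state in orthonormal bases derived from the SDP vectors; since on both sides these bases coincide with the standard tensor-product basis, the measurement outcomes are perfectly correlated and can be reproduced by shared randomness alone. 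Inserting this classical block-strategy into the block-repetition argument of Theorem~\ref{thm:sdp1amortized}(ii) then yields $\omega(G_{uL}^\ell) \geq (1-O(1/n^2))^\ell$.

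The subtlest step I anticipate is justifying the ``correlated sampling'' mentioned in the paper: the proof of Lemma~\ref{lem:rounding} is not a pure projective measurement in the SDP basis but also involves a randomized selection (of subspace or rotation) that is generically where genuine entanglement enters. I would need to verify that, precisely because all our SDP vectors sit in a fixed common orthonormal frame, this sampling step is insensitive to quantumness and can be faithfully implemented using shared public randomness without changing the success probability. Once this is checked, the three ingredients above fit together and the full chain $\omegans(G_{uL}^\ell) \geq \omega^*(G_{uL}^\ell) \geq \omega(G_{uL}^\ell) \geq (1-O(1/n^2))^\ell$ follows.
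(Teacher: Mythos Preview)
Your proposal is correct and follows essentially the same route as the paper: combine Lemma~\ref{lem:sdpline} with Theorem~\ref{thm:sdp1amortized} for the entangled bound, then observe that the rounding of Lemma~\ref{lem:rounding} applied to this particular SDP solution (all vectors lying in a single fixed orthonormal frame, hence also after tensoring) yields a strategy that is classical, giving the bound on $\omega(G_{uL}^\ell)$.

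One small clarification: you treat the ``correlated sampling'' as the step where entanglement might enter and hence the step you must declaw. In fact correlated sampling is already a purely classical shared-randomness procedure; the entanglement in Lemma~\ref{lem:rounding} lives in the measurement of the maximally entangled state, which is exactly the step you correctly neutralize by noting that both provers measure in the same standard basis. The paper's remark about correlated sampling is making a different point: it explains why the resulting classical strategy for $G_{uL}^\ell$ is \emph{not} a product strategy across the $\ell$ coordinates (even though the SDP solution is), and hence why it can beat $\omega(G_{uL})^\ell$. So your concern is misplaced but harmless---the argument goes through as you wrote it.
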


This shows that Holenstein's parallel repetition for the non-signaling value (Theorem~\ref{th:parallelforentangled}.4)
as well as the parallel repetition theorem for the entangled value of unique games
(Theorem~\ref{th:parallelforentangled}.2) are both tight up to a constant.%
\footnote{Strictly speaking, Holenstein's proof does not deal with probabilistic predicates,
although it can most likely be extended to deal with this case~\cite{Holenstein:private},
as was done in~\cite{Raz98}. In any case, the line game (which we consider next)
gives an alternative tight example for Holenstein's theorem with deterministic predicates.}

We complete this section by extending the above analysis to the line game, as shown
in the following theorem. This shows that alphabet size $2$ is sufficient to obtain
a counterexample to strong parallel repetition for both the entangled value and the non-signaling value,
and in particular shows that Theorem~\ref{th:parallelforentangled}.4 is tight also for this case.
The counterexample is not a unique game, but this is actually necessary: XOR games obey perfect parallel
repetition both in terms of the entangled value (Thm.~\ref{th:parallelforentangled}.1) and in terms of the non-signaling value
(even with probabilistic predicates, as is not difficult to see).

\begin{theorem}\label{thm:main}
For $\ell \geq n^2$, $\omegans(G_{L}^\ell) \ge \omega^*(G_{L}^\ell) \ge  \omega(G_{L}^\ell) \ge (1-O(1/n^2))^\ell$.
\end{theorem}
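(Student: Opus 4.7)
\noindent\emph{Proof plan.} The plan is to reduce this theorem to Theorem~\ref{thm:mainunique} by showing that the classical strategy constructed there for $G_{uL}^\ell$ is also a valid strategy for $G_L^\ell$ and achieves the same winning probability there. Since $\omegans \ge \omega^* \ge \omega$ by definition, it suffices to exhibit a classical strategy for $G_L^\ell$ of value at least $(1-O(1/n^2))^\ell$.

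The key observation is that the SDP1 solution built in Lemma~\ref{lem:sdpline} is highly degenerate: for every $s$ one has $u^s_3 = v^s_3 = 0$, and moreover $u^1_2 = v^1_2 = 0$ while $u^n_1 = v^n_1 = 0$. Under the tensorization of Lemma~\ref{lem:tensor}, each tensor-product vector $u^{\vec s}_{\vec a} = \bigotimes_i u^{s_i}_{a_i}$ therefore vanishes whenever some coordinate falls into one of these degenerate patterns, in particular whenever some $a_i = 3$, whenever some $s_i = 1$ and $a_i \neq 1$, or whenever some $s_i = n$ and $a_i \neq 2$. Inspecting the rounding of Lemma~\ref{lem:rounding} and its block-wise use inside Theorem~\ref{thm:sdp1amortized}(ii), the probability that the resulting classical strategy $S$ outputs an answer $\vec a$ at question $\vec s$ is zero whenever $u^{\vec s}_{\vec a}$ vanishes. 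Consequently $S$ has the following structure: it never outputs the symbol $3$; at every coordinate where the question is $1$ it deterministically outputs $1$; and at every coordinate where the question is $n$ it deterministically outputs $2$. The analogous statement holds for Bob.

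With these features in hand, a coordinate-by-coordinate comparison shows that $S$ wins $G_L^\ell$ on exactly the same event, and hence with exactly the same probability, as it wins $G_{uL}^\ell$. At any interior coordinate the constraint of both games reduces to $a_i = b_i$, so the winning events coincide. At a left-end loop coordinate the strategy deterministically answers $(1,1)$: this satisfies $G_L$'s constraint $a_i = b_i = 1$, and it also wins both of $G_{uL}$'s probabilistic constraints (both the identity and the $2 \leftrightarrow 3$ permutation fix $1$). The symmetric statement holds at right-end loops. Therefore the value of $S$ on $G_L^\ell$ equals its value on $G_{uL}^\ell$, which is at least $(1-O(1/n^2))^\ell$ by Theorem~\ref{thm:mainunique}, yielding the desired lower bound.

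The main technical obstacle is to rigorously verify the zero-preservation property of the rounding in Lemma~\ref{lem:rounding}, including the correlated-sampling step that makes the strategy non-product across the $\ell$ coordinates: one must show that the marginal probability assigned to answer $\vec a$ at question $\vec s$ is indeed zero whenever $u^{\vec s}_{\vec a} = 0$, taking care of the degenerate cases where the operator $M_{\vec s} = \sum_{\vec a} |u^{\vec s}_{\vec a}\rangle\langle u^{\vec s}_{\vec a}|$ is rank-deficient and the usual rounding formulas must be interpreted on its support. Once this bookkeeping is done, the entire proof reduces to the analysis of $G_{uL}^\ell$ already carried out.
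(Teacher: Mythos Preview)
Your proposal is correct and follows essentially the same approach as the paper: both argue that the zero pattern of the SDP1 vectors ($u^s_3=0$, $u^1_2=0$, $u^n_1=0$) survives tensorization and rounding, so the resulting classical strategy for $G_{uL}^\ell$ never outputs $3$, always answers $1$ at question $1$ and $2$ at question $n$, and hence wins $G_L^\ell$ with the same probability. You are in fact slightly more careful than the paper in explicitly flagging the zero-preservation through the correlated-sampling step of Lemma~\ref{lem:rounding} as a point requiring verification.
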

\begin{proof}
We first observe that the classical strategy for $G_{uL}^\ell$ constructed above
has the property that both provers always answer $1$ on a coordinate containing the question $1$, and
similarly, they always answer $2$ on a coordinate containing the question $n$.
Moreover, the provers never use the answer $3$.
This follows from the fact that the vectors constructed in Lemma~\ref{lem:sdpline}
satisfy $u_2^1=v_2^1=0$, $u_1^n=v_1^n=0$, and for all $s$, $u_3^s=v_3^s=0$.
As a result, when taking the tensor product of these vector and applying Lemma~\ref{lem:rounding} (as was
done in Theorem~\ref{thm:sdp1amortized}), we obtain the aforementioned property of the classical
strategy. Since the strategy does not use the answer $3$, it is also a valid strategy for
$G_{L}^\ell$. Moreover, it is easy to check that the winning probability of the strategy
in $G_{L}^\ell$ is equal to that in $G_{uL}^\ell$; this is because the strategy always
answers $1$ on $1$ and $2$ on $n$, and due to the way the games are constructed.
\end{proof}

\paragraph{Acknowledgments:}
We thank Thomas Holenstein for answering our queries regarding his parallel repetition theorem, and Nisheeth Vishnoi
for useful discussions.

\newcommand{\etalchar}[1]{$^{#1}$}

\end{document}